\theoremstyle{thmstyleone}%
\newtheorem{theorem}{Theorem}
\newtheorem{lemma}{Lemma}%
\newtheorem{corollary}{Corollary}[theorem]
\theoremstyle{thmstyletwo}%
\theoremstyle{thmstylethree}%
\newtheorem{assumption}{Assumption}%
\newcounter{experiment}
\newcommand{\R}{\mathbb{R}}
\newcommand{\E}{\mathbb{E}}
\newcommand{\N}{\mathcal{N}}
\begin{document}

\title[Article Title]{DePAint: A Decentralized Safe Multi-Agent Reinforcement Learning Algorithm considering  Peak and Average Constraints}

\author[1]{\fnm{Raheeb} \sur{Hassan}}\email{raheeb-2017915005@cs.du.ac.bd}
\equalcont{These authors contributed equally to this work.}

\author[2]{\fnm{K.M. Shadman} \sur{Wadith}}\email{kmshadman-2017714990@cs.du.ac.bd}
\equalcont{These authors contributed equally to this work.}
\author[1]{\fnm{Md. Mamun or} \sur{Rashid}}\email{mamun@cse.du.ac.bd}
\author[1]{\fnm{Md. Mosaddek} \sur{Khan}}\email{mosaddek@du.ac.bd}

\affil*[1]{\orgdiv{Department of Computer Science and Engineering}, \orgname{University of Dhaka}, \orgaddress{\city{Dhaka - 1000}, \country{Bangladesh}}}

\abstract{
 The domain of safe multi-agent reinforcement learning (MARL), despite its potential applications in areas ranging from drone delivery and vehicle automation to the development of zero-energy communities, remains relatively unexplored. The primary challenge involves training agents to learn optimal policies that maximize rewards while adhering to stringent safety constraints, all without the oversight of a central controller. These constraints are critical in a wide array of applications, including collision avoidance in autonomous systems, adherence to traffic and resource management laws in drone delivery, vehicle automation, and ensuring energy efficiency and sustainability in zero-energy communities. Moreover, ensuring the privacy of sensitive information in decentralized settings introduces an additional layer of complexity, necessitating innovative solutions that uphold privacy while achieving the system’s safety and efficiency goals.  In this paper, we address the problem of multi-agent policy optimization in a decentralized setting, where agents communicate with their neighbors to maximize the sum of their cumulative rewards while also satisfying each agent's safety constraints. We consider both peak and average constraints. In this scenario, there is no central controller coordinating the agents and both the rewards and constraints are only known to each agent locally/privately. We formulate the problem as a decentralized constrained multi-agent Markov Decision Problem and propose a momentum-based decentralized policy gradient method, DePAint, to solve it. To the best of our knowledge, this is the first privacy-preserving fully decentralized multi-agent reinforcement learning algorithm that considers both peak and average constraints.  We then provide theoretical analysis and empirical evaluation of our algorithm in a number of scenarios and compare its performance to centralized algorithms that consider similar constraints. 
}

\keywords{Decentralized Learning, Multi-Agent Reinforcement Learning, Safe Reinforcement Learning, Peak Constraint, Average Constraint}



\maketitle

\section{Introduction}
\label{chap:introduction}

The prioritization of safety has become increasingly important as artificial intelligence (AI) technologies are deployed in critical domains with the potential for significant effects \cite{amodei2016concrete}. Ensuring the safety of AI agents, particularly in complex and dynamic scenarios, poses significant challenges. In this context, safety encompasses the design, development, and deployment of AI systems that minimize the potential for negative outcomes. This involves addressing issues such as bias and impartiality, transparency, robustness, and the capacity to deal with unforeseen circumstances. Over the years, we have witnessed a continuous pursuit of research efforts aimed at addressing safety problems and, as such, enabling the dependable deployment of intelligent agents in real-world scenarios.\smallskip

In recent years, Multi-Agent Reinforcement Learning (MARL) has emerged as a promising avenue for tackling various real-world challenges. The integration of safety considerations with MARL stands to enhance the applicability of intelligent agents in diverse domains. For instance, within the realm of autonomous transportation, multi-agent reinforcement learning facilitates collaborative navigation in intricate traffic scenarios, thereby fostering safer and more efficient transportation systems. The individual agents will be able to make smarter decisions by incorporating safety constraints into MARL algorithms, thereby promoting cooperation and collision avoidance among multiple vehicles \cite{shalev2016safe}. Similarly, in the context of zero-energy communities, multi-agent reinforcement learning holds the potential to optimize energy consumption and distribution among different households, thereby advancing sustainability and minimizing wastage. Furthermore, in smart grid systems, MARL algorithms can optimize energy distribution while taking safety considerations into account \cite{alqahtani2022dynamic}. By addressing safety concerns through multi-agent reinforcement learning, these applications, among others such as multi-agent pathfinding and swarm intelligent systems, stand to become more reliable, scalable, and adaptable, paving the way for the widespread adoption and integration of intelligent agents in real-world scenarios.\smallskip

When it comes to safety in MARL, Constrained Markov Decision Processes (CMDPs) offer an augmentation to conventional Markov Decision Processes (MDPs) by integrating safety constraints into the decision-making framework \cite{altman1995constrained}. In the CMDP framework, agents are not solely focused on optimizing their expected rewards but are also bound to operate within a set of established safety constraints. This modification holds substantial relevance in various AI applications. By ensuring that agents interact securely within defined boundaries, CMDPs serve as a preventive measure against unsafe states or behaviors, thereby enhancing system resilience and alignment with safety norms. The incorporation of CMDP methodologies provides formal safety validations and addresses critical safety challenges that might be overlooked due to the inherent bias toward performance over safety in many conventional MARL algorithms.\smallskip

Over the past few years, a multitude of research efforts have sought to provide solutions to CMDP challenges. CMDP solvers are indispensable for addressing complex issues centered around constraint-driven decision-making. Constrained Policy Optimization (CPO) has emerged as a promising approach for addressing the safety and stability concerns of reinforcement learning \cite{achiam2017constrained}. Through integrating trust region methods, CPO ensures that policy updates remain confined within a region where constraints are satisfied. This mechanism offers robustness against unintended violations of safety conditions, thereby enhancing the reliability and stability of learned policies. Extending CPO, Multi-Agent Constrained Policy Optimization (MACPO) \cite{gu2021multi} jointly optimizes policies in a centralized manner while adhering to collective constraints. This collaborative optimization approach marks a significant advancement in enabling multi-agent systems to operate within safety bounds. However, coordinating policy updates and constraint representations among multiple agents introduces additional complexities. Moreover, the scalability of this centralized algorithm in larger multi-agent systems could become a bottleneck, limiting its applicability.\smallskip

To address this scalability issue in a centralized setting, there are some innovative approaches \cite{gronauer2021multi} that combine centralized training and decentralized execution. It was first introduced in the Multi-Agent Deep Deterministic Policy Gradient Algorithm (MADDPG) \cite{lowe2017multi} and was later extended using an actor-critic approach \cite{parnika2021attention}. This approach allows agents to share a centralized critique for value estimation while independently learning decentralized policies. However, as the number of agents increases, the complexity of the central critic's architecture grows significantly, potentially leading to slower training times and higher memory requirements. To overcome this problem, Safe Dec-PG \cite{lu2021decentralized} introduced the decentralized training approach. Though it overcomes the issue of scalability, the policy gradient approach suffers from high variance due to the non-convex nature of the reward function, especially in the collaborative multi-agent setting, leading to inconsistent gradient estimates that hinder stable policy updates.\smallskip

 In addition, to solve CMDP problems, most works deal with only one constraint, which is either a peak constraint \cite{bai2020provably, gattami2019reinforcement, geibel2006reinforcement, geibel2005risk} or an average constraint \cite{chow2018lyapunov, ding2021provably, gattami2019reinforcement, prashanth2016variance}. However, solving a problem with both peak and average constraints is sometimes more applicable to real-world scenarios. For example, a scenario where an autonomous drone delivery service is deployed in a bustling metropolitan city. Incorporating both peak and average constraints can ensure the safety and efficiency of drone delivery operations. The peak constraints come into play during critical moments of flight, such as when navigating through dense urban areas or landing in busy hospital helipads. The drones must adhere to strict speed and altitude limits to avoid potential collisions with other air traffic or pedestrians on the ground. The average constraint ensures that the drones maintain sustainable energy consumption and operational efficiency throughout their delivery missions. By balancing energy use and optimizing flight paths, drones can consistently deliver medical supplies while minimizing environmental impact and operational costs.\smallskip

In a multi-agent setting, the problem of addressing both peak and average constraints has been tackled by CMIX \cite{liu2021cmix}. Drawing upon the concept of QMIX \cite{rashid2018qmix} and employing value function factorization, CMIX proposed its Centralized Learning and Decentralized Execution (CTDE) algorithm to solve the multi-agent Markov Decision Problem under both peak and average constraints. However, their algorithm is centralized in nature. Recent research in a similar vein introduced a decentralized Constrained Q-Learning algorithm \cite{geng2023reinforcement}, applied in the context of Vehicular Network Routing. In our research, the primary objective is to devise a more practical MARL algorithm by incorporating decentralization and privacy. We propose a novel decentralized algorithm for networked agents, aiming to maximize the discounted cumulative reward while respecting both peak and average constraints, where the individual reward function and constraints are private and local to the respective agents. To address this challenge, we formulate the problem as a Decentralized Constrained Multi-Agent Markov Decision Problem and introduce a momentum-based Decentralized Policy Gradient method, DePAint, to solve it. Empirical results demonstrate DePAint's robust performance in diverse connected graphs, even with minimal connection networks, and its superiority over centralized algorithms dealing with similar constraints. Our research provides an efficient and safety-aware decentralized approach, showcasing the practicality and efficacy of decentralized MARL methods for real-world applications.  The contributions of this paper are summarized as follows:

\begin{itemize}
    \item We introduce a decentralized MARL algorithm, marking the first instance of such an algorithm to incorporate both peak and average constraints. This unique feature allows us to better model the constraints in real-world scenarios, offering a comprehensive solution for ensuring agent safety during training and operation.

    \item The decentralized nature of the algorithm eliminates the need for a central controller during the training process, showcasing its autonomy and scalability. This autonomy allows the algorithm to adapt to an increasing number of agents, making it a versatile and efficient solution for large-scale multi-agent systems.

    \item To address the challenge of variance associated with constrained multi-agent policy gradient methods, we have incorporated momentum-based variance reduction. This addition stabilizes the learning process, significantly reducing variance and enhancing the overall efficiency of the decentralized algorithm. The result is a more robust training framework for multi-agent systems.

    \item Our algorithm prioritizes privacy by default, introducing a mechanism that maintains the confidentiality of individual agent rewards and constraints. In the absence of a central controller, this privacy-preserving approach ensures that each local agent's reward and constraint information remains private, contributing to enhanced privacy in decentralized multi-agent systems.
\end{itemize}

We discuss the formulation of the problem and necessary preliminary concepts in the following section. We explain our proposed algorithm in Section \ref{chap:proposedMethod}. Then, in Section \ref{chap:theoreticalProof}, we theoretically analyze our proposed algorithm. We report our empirical results in Section \ref{sec:empiricalEval}, and Section \ref{sec:conclusion} concludes our discussion.

\section{Problem Formulation \& Background}
\label{chap:problemFormulationAndBackground}
In this section, we first mathematically formulate the problem we are trying to solve in Section~\ref{sec:problemFormulation}. We then discuss some preliminary concepts that are required to properly understand our research in Section~\ref{sec:background}.

\subsection{Problem Formulation}
\label{sec:problemFormulation}

Consider a team of agents, denoted by $\N = \{1, ..., n\}$. Each agent operates autonomously but can communicate with its peers via a communication network. This network is aptly represented by a well-connected, undirected graph  $G$, where the nodes and edges symbolize agents and communication links, respectively. Here, we use the term $\Delta$ so that $\Delta(\diamond)$ denotes the set of all possible probability distributions over the set $\diamond$.

Now, we formulate the decentralized safe MARL problem as an 8-tuple (i.e. $<S,\{A_i\}_{i=1}^n, P, \{R_i\}_{i=1}^n, \{C_i\}_{i=1}^n, \{K_i\}_{i=1}^n, G, \gamma>$) constrained multi-agent Markov game, where

\begin{itemize}
  \item $S$ is the global state space shared by all agents,
  \item $A:= A_1 \times ... \times A_n$ is the joint action space of all agents,
  \item $P: S \times A \rightarrow \Delta(S)$ is the state transition model,
  \item $R_i: S \times A \rightarrow \R$ is the reward function of agent i,
  \item $C_i: S \times A \rightarrow \R^{m_1}$ is the long term utility function of agent i,
  \item $K_i: S \times A \rightarrow \R^{m_2}$ is the instantaneous utility function of agent i,
  \item $\gamma$ is the discount factor,
  \item $G = (\N, \mathcal{E})$ is the graph representing the communication network, where $\N$ is the set of nodes and $\mathcal{E}$ is the set of edges.
\end{itemize}

At any point in time $t$, the collective state of all agents can be described by a state $s^t$ from the state space $S$. Each agent $i$ can take any action $a_i^t$ in $A_i$. Now, to take an action, each agent $i$ is equipped with a local policy function $\pi_i: S \rightarrow \Delta(A)$. Within this setting, Equation~\ref{eq:joint_policy} defines the joint policy, $\pi(a^t | s^t)$. So, for every state $s^t$, the corresponding joint action $a^t =$ $($ $a_1^t$, $a_2^t$, $...$, $a_n^t$ $)$ is sampled from the probability distribution given by $\pi(s^t)$.
\begin{equation}
    \pi(a^t | s^t) := \prod_{i = 1}^n \pi_{i}(a_i^t | s^t) \label{eq:joint_policy}
\end{equation}

The agents reside in an environment defined solely by the state transition model $P: S \times A \rightarrow \Delta(S)$. Given the current state $s^t$ and $a^t$, $P(s^t, a^t)$ gives a probability distribution over $S$, and then the next state $s^{t+1}$ is sampled from this probability distribution. For each state-action pair, the environment also emits three values $R_i(s^t, a^t)$, $C_i(s^t, a^t)$, and $K_i(s^t, a^t)$, which we are going to discuss shortly.\smallskip

Let $\tau =$ $($ $s^0,$ $a^0,$ $s^1,$ $...\ ,$ $s^{t-1},$ $a^{t-1},$ $s^t,$ $ \ldots\ )$ be an infinite sequence of state-action pairs. Here, each action \( a^t \) is derived from a sampling process based on the joint policy function \( \pi(s^t) \). Similarly, each subsequent state \( s^t \) is determined by sampling from the transition probability function \( P(s^{t-1}, a^{t-1}) \). Such an infinite sequence is denominated as a trajectory. Therefore, given an initial state distribution of $\rho(s^0)$, the probability of a certain trajectory $\tau$ for a given policy $\pi$ is defined in Equation~\ref{eq:traj_prob}. Based on Equation~\ref{eq:traj_prob}, we can define $p(\text{.}|\pi)$ or simply $p(\pi)$ as the probability distribution of all possible trajectories given $\pi$.
\begin{equation}
    p(\tau | \pi) := \rho(s^0) \prod_{t=0}^\infty \pi(a^t | s^t) P(s^{t+1}|s^t, a^t) \label{eq:traj_prob}
\end{equation}

For each agent \( i \), Equations~\ref{eq:disc_reward} and \ref{eq:disc_util} define two distinct terms \( J_i^{\,R} \)  and \( J_i^{\,C} \), respectively ($\underset{\tau \sim p(.|\pi)}{\E}$ denotes the expected value over all trajectories $\tau$ sampled from $p(.|\pi)$). \( J_i^{\,R} \) represents the discounted long-term return associated with the reward function, while \( J_i^{\,C} \) denotes the accumulated long-term utility derived from the utility function. These terms encapsulate the agent's prospective outcomes in terms of rewards and utilities over an extended time horizon. The aim of the agents is to find the policy $\pi$ that maximizes $J_i^{R}(\pi)$ while satisfying constraints on $J_i^{C}(\pi)$ and $K_i$. Formally, we can define the problem as defined in Equation~\ref{eq:prob}, \ref{eq:avg_const} and \ref{eq:peak_const}. Within this setting, $c_i \in \R^{m_1}$ and $k_i \in \R^{m_2}$ are the constraints imposed on those two functions. We term the constraint defined in Equation~\ref{eq:avg_const} as the \textit{average constraint} and the constraint in Equation~\ref{eq:peak_const} as the \textit{peak constraint}.  In Figure~\ref{fig:problemFormulation}, we present a pictorial representation of our problem, illustrating our objective to maximize rewards while concurrently satisfying the constraints. 

\begin{align}
    J_i^{R}(\pi) := \underset{\tau \sim p(.|\pi)}{\E} \left[\sum_{t = 0}^{\infty} \gamma^t R_i (s^t, a^t) \right]
    \label{eq:disc_reward}\\
    J_i^{C}(\pi) := \underset{\tau \sim p(.|\pi)}{\E} \left[\sum_{t = 0}^{\infty} \gamma^t C_i (s^t, a^t) \right]
    \label{eq:disc_util}
\end{align}

\begin{align}
    & \underset{\pi \in \Pi}{\max}\ \frac{1}{n} \sum_{i=0}^n J_i^{R}(\pi) \label{eq:prob}\\
    \text{subject to }\ & J_i^{C}(\pi) \ge c_i & \forall i \in \N \label{eq:avg_const}\\
    \text{and }\ & K_i(s^t, a^t) \ge k_i & \forall i \in \N, (s^t, a^t) \in p(.|\pi). \label{eq:peak_const}
\end{align}

\begin{figure}[H]
    \includegraphics[width=\textwidth]{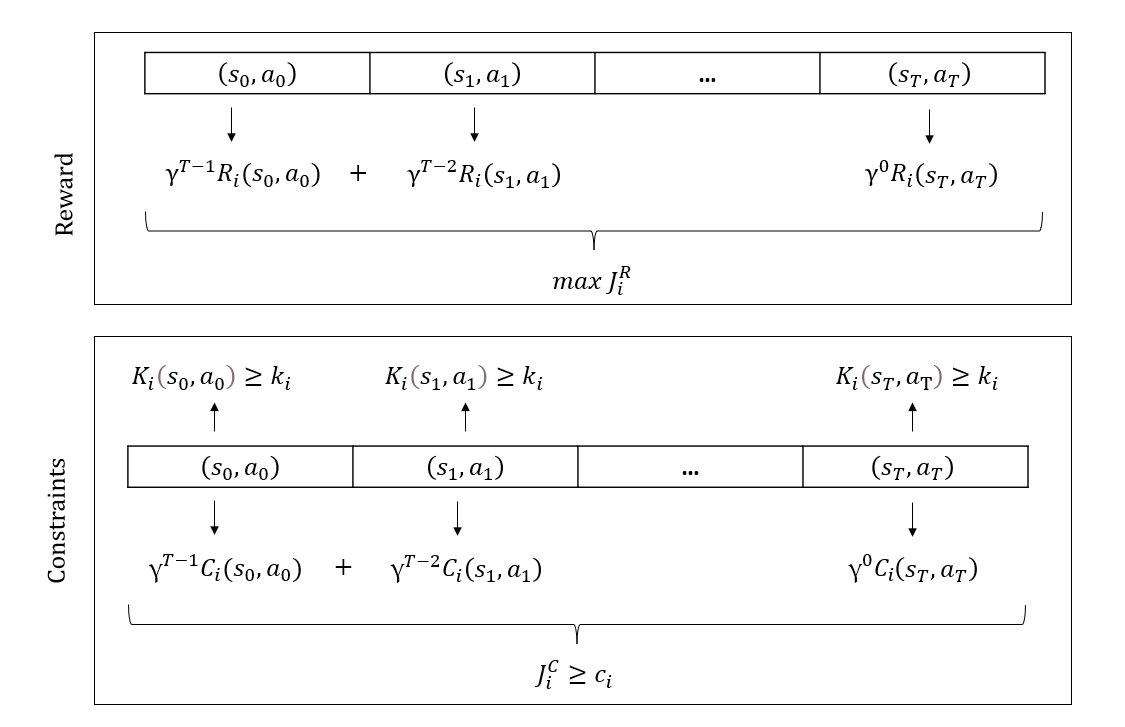}
    \caption{Pictorial representation of the problem to be solved.}
    
    \label{fig:problemFormulation}
\end{figure}

\subsection{Background}
\label{sec:background}
Before moving on to our proposed methodologies to solve the problem formulated in Section~\ref{sec:problemFormulation}, we discuss some preliminary concepts that are required to properly understand our research in this section.

\subsubsection{Policy Gradient and Variance Reduction}
\label{sec:policy_grad}
Policy gradient methods are a class of reinforcement learning algorithms that directly optimize the agent's policy to maximize the expected cumulative rewards.
The fundamental concept behind these methods is to adjust the parameters of the policy towards enhancing the expected return.
This is different from value-based methods such as Q-learning \cite{watkins1992q} and SARSA \cite{rummery1994line} that focus on learning the optimal value function, which estimates the expected cumulative reward starting from a given state under a fixed policy.
Policy gradient methods aim to directly optimize the policy itself, often bypassing the need to explicitly estimate the value function. This difference in approach makes policy gradient methods, such as the REINFORCE method, well suited for problems with continuous and high-dimensional action spaces.

Now, a key component of a policy gradient method is the gradient estimation method. For a parametrized policy $\pi_{\theta}(\tau)$, where $\tau$ represents a trajectory, the gradient of the expected reward $\mathbb{E}_{\theta}[R(\tau)]$ according to the REINFORCE method is given by Equation~\ref{eq:reinforce}.

\begin{equation}
    \nabla_{\theta} \mathbb{E}_{\theta}[R(\tau)] = \mathbb{E}[R(\tau)\, \nabla_{\theta} \log \pi_{\theta}(\tau)] \label{eq:reinforce}
\end{equation}

Here, $R(\tau)$ denotes the total reward obtained from trajectory $\tau$, and $\nabla_{\theta}$ represents the gradient with respect to the policy parameters $\theta$. The REINFORCE gradient is obtained by sampling trajectories and computing the gradient using these samples.

In the actor-critic method, an additional baseline term $b$ is introduced to reduce the variance of the gradient estimate. The baseline represents an estimate of the expected reward and helps stabilize the learning process. The gradient with respect to the policy parameters according to this method is given by Equation~\ref{eq:reinforce_ac}

\begin{equation}
    \nabla_{\theta} \mathbb{E}_{\theta}[R(\tau)] = \mathbb{E}[(R(\tau) - b) \nabla_{\theta} \log \pi_{\theta}(\tau)] \label{eq:reinforce_ac}
\end{equation}

In this formulation, the term $(R(\tau) - b)$ is subtracted from the total reward $R(\tau)$ to form the advantage. The advantage represents the difference between the observed reward and the baseline, providing a measure of how well the policy performed relative to the baseline.

\subsubsection{Primal Dual Formulation}
\label{sec:primal_dual_back}
Suppose that we want to maximize a reward function $J^{R}(\theta)$ parameterised by some variable $\theta$, subject to a constraint on the cost function $J^{C}(\theta)$ also parameterised by $\theta$ as Expression~\ref{eq:simple_prob} 
where $c$ is some constant.
\begin{align}
    \underset{\theta \in \Theta}{\text{maximize}}\ J^{R}(\theta) \quad \text{subject to} \quad J^{C}(\theta) \ge c \label{eq:simple_prob}
\end{align}
Then, using the method of Lagrange multipliers \cite{bertsekas2014constrained}, Expression~\ref{eq:simple_prob} can be reformulated as Expression~\ref{eq:simple_lagrange} using the Lagrangian $J^{L}$ 
where $\lambda$ denotes the dual variables.
\begin{equation}
    \underset{\theta \in \Theta}{\text{maximize}}\ 
    \underset{\lambda \ge 0}{\text{minimize}}\ 
    J^{L}(\theta, \lambda) := J^{R}(\theta) + \lambda(J^{C}(\theta) - c) \label{eq:simple_lagrange}
\end{equation} 
Thus, our original constrained maximization problem is now a min-max problem on the primal ($\theta$) and dual variables ($\lambda$). 

\subsubsection{Stochastic Gradient Descent and Ascent}
\label{sec:sgda_back}
One of the most prominent algorithms for solving min-max optimization problems of the form similar to Expression~\ref{eq:simple_lagrange} is Stochastic Gradient Descent Ascent (SGDA) \cite{beznosikov2023stochastic}. In its simplest form, small increments are added and scaled by the gradient of the function with respect to the variable over which we are trying to maximize the function, and small decrements are subtracted and scaled by the gradient with respect to the variable over which we are trying to minimize the function. Over several updates, the values of the variables converge toward the optimal values. So if this is applied for Expression~\ref{eq:simple_lagrange}, we get the following update rules at time $t$:
\begin{align}
    \theta^{(t+1)} = \theta^{(t)} + \eta_1 \nabla_\theta J^{\,L}(\theta, \lambda)\\
    \lambda^{(t+1)} = \lambda^{(t)} - \eta_2\nabla_\lambda J^{\,L}(\theta, \lambda) \label{eq:simple_update_lambda}.
\end{align}
Now, there is one further trick needed in order to ensure correctness considering the constricted domain of $\lambda \ge 0$. That is projected gradient descent. The new version of Equation \ref{eq:simple_update_lambda} would be Equation \ref{eq:simple_update_lambda_proj} where $\mathcal{P}_\Lambda$ denotes the projection operator defined for the domain $\Lambda$.
\begin{align}
    \lambda^{(t+1)} = \mathcal{P}_\Lambda(\lambda^{(t)} - \eta_2\nabla_\lambda J^{\,L}(\theta, \lambda)) \label{eq:simple_update_lambda_proj}
\end{align}
This operator essentially projects the value of its parameter onto the domain $\Lambda$. This ensures that the value of $\lambda$ always stays within its restricted domain $\Lambda$.

\subsubsection{Decentralized Optimization}
\label{sec:back_dec_op}
Suppose we have $N$ agents connected by a communication network $G$ and each agent $i$ is equipped with a function $F_i(\theta, \lambda)$, and we need to find the values of $\theta$ and $\lambda$ for which,
\begin{align}
    \underset{\{\theta \in \Theta\}}{\text{max}}\ 
    \underset{\{\lambda \in \Lambda\}}{\text{min}}\ 
    \frac{1}{n} \sum_{i=1}^{n} F(\theta, \lambda).
\end{align}

Now, a decentralized approach to solving this has each agent with its own version of $\theta$ and $\lambda$, namely $\theta_i$ and $\lambda_i$. Then the problem can be reformulated as in Equation~\ref{eq:dec_op_prob} where $\mathcal{N}_i$ denotes the set of neighbors of agent $i$.  It is imperative to enforce the condition $\theta_i = \theta_j$ to guarantee uniformity in the policy across all agents. This constraint ensures that the learning objectives remain consistent among all participating agents. 
\begin{align}
    \underset{\{\theta_i \in \Theta\}}{\text{max}}\ 
    \underset{\{\lambda_i \in \Lambda\}}{\text{min}}\ 
    \frac{1}{n} \sum_{i=1}^{n} F(\theta_i, \lambda_i) \label{eq:dec_op_prob}\\
    \text{s.t.}\quad \theta_i = \theta_j\quad j \in \mathcal{N}_i,\ \forall i \nonumber.
\end{align}

Then a stochastic gradient descent ascent approach similar to \cite{xian2021faster} can be applied. The basic idea is that each agent performs a local gradient update and then seeks consensus with its neighbors in order to fulfill the equality constraint. The overall process can be summarized by the steps: 

\begin{align}
    \theta_i^{t+1} := \sum_{j \in \mathcal{N}_i}\ W_{ij} (\theta_j^t + \eta_1 \nabla_{\theta_i}\ F(\theta_i^{t}, \lambda_i^{t}) \label{eq:back_theta_update_0},\\
    \lambda_i^{t+1} := \sum_{j \in \mathcal{N}_i}\ W_{ij} (\lambda_j^t - \eta_2 \nabla_{\lambda_i}\ F(\theta_i^{t}, \lambda_i^{t})) \label{eq:back_lambda_update_0}.
\end{align}

Here, Equation \ref{eq:back_theta_update_0} represents the gradient ascent applied to the primal variable, and Equation \ref{eq:back_lambda_update_0} refers to the descent applied to the dual variables. 
$\eta_1$ and $\eta_2$ are the corresponding learning rates. $W \in \mathbb{R}^{n \times n}$ is a weight matrix associated with the communication network $G$ and is doubly stochastic, that is, $W 1_n = 1_n$ and $1_n^TW = 1_n^T$.
The row-stochasticity of the weight matrix ensures consensus among all agents, while its column-stochasticity guarantees that the local gradient of each agent contributes equally to the global objective.

Despite its simplicity, vanilla decentralized gradient ascent/descent suffers from slow convergence. To remedy this, gradient tracking \cite{li2020communication, pu2021distributed} can be used to reduce variance and speed up convergence. The basic idea is to correct the biases between the local copies of the parameter by tracking the average gradient. The modified versions of Equations \ref{eq:back_theta_update_0}, \ref{eq:back_lambda_update_0} with gradient tracking are shown below, 

\begin{align}
    x_i^{t+1} := \sum_{j \in \mathcal{N}_i}\ W_{ij} (x_j^t + \nabla_{\theta_i}\ F(\theta_i^t, \lambda_i^t) - \nabla_{\theta_i}\ F(\theta_i^{t-1}, \lambda_i^{t-1})), \label{eq:back_theta_tracking}\\
    y_i^{t+1} := \sum_{j \in \mathcal{N}_i}\ W_{ij} (y_j^t + \nabla_{\lambda_i}\ F(\theta_i^t, \lambda_i^t) - \nabla_{\lambda_i}\ F(\theta_i^{t-1}, \lambda_i^{t-1})), \label{eq:back_lambda_tracking}\\
    \theta_i^{t+1} := \sum_{j \in \mathcal{N}_i}\ W_{ij} (\theta_j^t + \eta_1 x_j^{t+1}), \label{eq:back_theta_update}\\
    \lambda_i^{t+1} := \sum_{j \in \mathcal{N}_i}\ W_{ij} (\lambda_j^t - \eta_2 y_j^{t+1}). \label{eq:back_lambda_update}
\end{align}

Here, $x_i$, $y_i$ are the gradient tracking variables for $\theta_i$ and $\lambda_i$ respectively, 
Equation \ref{eq:back_theta_tracking} and Equation \ref{eq:back_lambda_tracking} represent the tracking steps, while Equation \ref{eq:back_theta_update} and Equation \ref{eq:back_lambda_update} refer to the gradient ascent and descent step respectively, similar to Equation \ref{eq:back_theta_update_0} and Equation \ref{eq:back_lambda_update_0}.

\section{The DePAint Algorithm}
\label{chap:proposedMethod}
The purpose of this section is to explain our proposed method for solving the problem outlined in Section~\ref{sec:problemFormulation}. We coin this novel algorithm DePAint (\textbf{De}centralized Policy Gradient considering \textbf{P}eak and \textbf{A}verage Constra\textbf{int}s). This section is primarily divided into two parts; the first part discussed in Section~\ref{sec:algo_prelim} outlines how we convert the problem stated in Section~\ref{chap:problemFormulationAndBackground} to an equivalent yet simpler form, and in Section \ref{sec:algorithm} we explain the steps in our algorithm which tries to tackle the aforementioned problem.

\subsection{Preliminaries}
\label{sec:algo_prelim}
In order to tackle the problem at hand, we transform it into an equivalent simpler problem. At first, in Section~\ref{ssec:policy_grad}, we introduce a way to approximate the agent's policy to handle large state and action spaces properly. Secondly, in Section~\ref{ssec:peak_deal}, we augment the long-term utility function so that it accounts for both the average and peak constraint, leaving us with only one constraint to deal with. Finally, in Section~\ref{ssec:primal_dual}, using the method of Lagrange multipliers, we convert the constrained maximization problem into a min-max optimization problem.

\subsubsection{Deep Policy Gradient}
\label{ssec:policy_grad}
The space of all possible policy functions can be rather large; as a result, it is intractable to exactly define the policy function. As such we use parameterized function approximators such as deep neural networks to represent each agent's policy. Using a deep neural network as a policy function approximator reduces the problem of finding the optimal policy to finding the optimal parameters for the policy function approximator i.e. the weights of the neural network. So, $\pi_i: S \rightarrow \Delta(A)$ is parameterized as $\pi_{\theta_{[i]}}$ by a parameter $\theta_{[i]} \in \R^{d_i}$ with dimension $d_i$. So, we can define the joint policy as shown in Equation~\ref{eq:param_policy} where, $\theta = [\theta_{[1]}^T\ ...\ \theta_{[n]}^T]^T \in \R^d$ and $d = \sum_{i = 1}^n d_i$.
\begin{equation}
    \pi_{\theta}(a^t | s^t) := \prod_{i = 1}^n \pi_{\theta_{[i]}}(a_i^t | s^t) \label{eq:param_policy}
\end{equation}
As the agents are required to train independently, each agent $i$ keeps a local version of the parameter $\theta$, named $\theta_i$.
Based on this we can modify Equations \ref{eq:traj_prob}, \ref{eq:disc_reward} and \ref{eq:disc_util} as,
\begin{align}
    p(\tau | \theta) := \rho(s^0) \prod_{t=0}^\infty \pi_\theta(a^t | s^t) P(s^{t+1}|s^t, a^t) \label{eq:traj_prob_theta}\\
    J_i^{R}(\theta) := \underset{\tau \sim p(.|\theta)}{\E} \left[\sum_{t = 0}^{\infty} \gamma^t R_i (s^t, a^t) \right]
    \label{eq:disc_reward_theta}\\
    J_i^{C}(\theta) := \underset{\tau \sim p(.|\theta)}{\E} \left[\sum_{t = 0}^{\infty} \gamma^t C_i (s^t, a^t) \right].
    \label{eq:disc_util_theta}
\end{align}
That leads to the problem of finding the optimal policy $\pi$ defined in Section~\ref{sec:problemFormulation} becoming a problem of finding the optimal parameters $\theta$ for which,

\begin{align}
    & \underset{\theta \in \R^d}{\max}\ \frac{1}{n} \sum_{i=0}^n J_i^{R}(\theta) \label{eq:prob_theta}\\
    \text{subject to }\ & J_i^{C}(\theta) \ge c_i & \forall i \in \N \label{eq:avg_const_theta}\\
    \text{and }\ & K_i(s^t, a^t) \ge k_i & \forall i \in \N, (s^t, a^t) \in p(.|\theta). \label{eq:peak_const_theta}
\end{align}

\subsubsection{Dealing with Peak Constraint}
\label{ssec:peak_deal}
Dealing with two types of constraints simultaneously is difficult because peak and average constraints have pretty diverse forms.  We solve this problem by augmenting the average constraint to include the peak constraint.
We define the new long-term utility function as Equation~\ref{eq:new_peak} where $C_i(s, a)$ is the long-term utility function we defined in Section~\ref{sec:problemFormulation}, $M$ is a large number (greater than $max\ J_i^C - c_i$), and $I_i$ is an identity function defined in Equation \ref{eq:new_peak_identity}.

\begin{equation}
    \widehat{C}_i(s, a) := C_i(s, a) - M I_i(s, a) \label{eq:new_peak}.
\end{equation}
\begin{equation}
    I_i(s, a) := 
    \begin{cases}
    \mathbf{1}_{m_1},& \text{if } K_i(s, a) < k_i\\
    0,              & \text{otherwise} \label{eq:new_peak_identity}
\end{cases}
\end{equation}

For a finite horizon, $max\ J_i^C - c_i$ is bounded. As such, we can always define a suitable $M$ for a given environment. This formulation ensures the fact that when the peak constraint is violated i.e. $K_i(s, a) < k_i$, the new average constraint is also violated, $\widehat{C}_i(s, a) < c_i$. As a result, we no longer need to handle the peak constraint explicitly. We can also define a new $J_i^{\widehat{C}}(\pi)$ based on the newly defined $\widehat{C}$.
\begin{equation}
     J_i^{\widehat{C}}(\pi) := \underset{\tau \sim p(.|\pi)}{\E} \left[\sum_{t = 0}^{\infty} \gamma^t \widehat{C}_i (s^t, a^t) \right]
     \label{eq:new_lt_util}
\end{equation}
From this point onward, for the sake of simplicity, we are going to use $C$ and $J_i^{C}(\pi)$ to refer to $\widehat{C}$ and $J_i^{\widehat{C}}(\pi)$ respectively.

\subsubsection{Primal Dual Formulation}
\label{ssec:primal_dual}
Now that we only have to deal with the average constraint, the original problem can be written as,
\begin{align}
    \underset{\theta \in \Theta}{\text{max}}\ \frac{1}{n} \sum_{i=1}^{n} J_i^{R}(\theta) \ \  \text{subject to} \ \ J_i^{C}(\theta) \ge c_i \ \ \forall i \in \mathcal{N}. \label{eq:no_peak_cmdp}
\end{align}
Maximizing the objective while dealing with the constraints is not a trivial task. Therefore as discussed in Section~\ref{sec:primal_dual_back} using the method of Lagrange multipliers, we can formulate this constrained maximization problem as defined in Expression \ref{eq:no_peak_cmdp} into an equivalent unconstrained min-max problem for the associated Lagrangian $J^{L}(\theta, \lambda_1, ..., \lambda_n)$ in Expression~\ref{eq:multi_primal_dual} where $\theta$ is the primal variable and $\lambda_1, ..., \lambda_n$ are the non-negative Lagrange multipliers or dual variables.
\begin{align}
    \underset{\theta \in \Theta}{\text{max}}\ 
    \underset{\{\lambda_i \ge 0\}}{\text{min}}\ 
    J^{L}(\theta, \lambda_1, ..., \lambda_n) := \frac{1}{n} \sum_{i=1}^{n} J_i^{R}(\theta) + \sum_{i=1}^{n} \lambda_i(J_i^{C}(\theta) - c_i) \label{eq:multi_primal_dual}
\end{align}
Now the problem is reduced to maximizing on the primal variable $\theta$ and minimizing on the dual variables $\{\lambda_i\}_{i=1}^n$. That enables us to use gradient descent ascent based methods to solve the problem.

\subsection{The Algorithm Description}
\label{sec:algorithm}
Now that we have transformed the problem into a more suitable form, we explain our proposed algorithm for solving this problem. The steps of the algorithm are outlined in the pseudocode defined in Algorithm~\ref{alg:depaint}.

\SetKwFor{ForParallel}{for}{do in parallel}{endfor}

\begin{algorithm*}
\caption{DePAint} \label{alg:depaint}
    \SetKwInOut{Input}{Input}
    \SetKwInOut{Output}{Output}
    \Input{number of iterations $T$, horizon $H$, batch size $B$, primal learning rate $\eta_1$, dual learning rate $\eta_2$, primal momentum parameter $\beta_1$, dual momentum parameter $\beta_2$, weight matrix $W$}
    \Output{optimal policy parameter $\theta_i^{T+1}$}
    \BlankLine

    \SetKwFunction{UpdateCritics}{UpdateCritics}
    \SetKwFunction{EstimateGradients}{EstimateGradients}
    \SetKwFunction{Momentum}{Momentum}
    \SetKwFunction{UpdateTracking}{UpdateTracking}
    \SetKwFunction{UpdateParams}{UpdateParams}

    \ForParallel {$i \in \mathcal{N}$}{
        
        Sample $B$ finite-horizon trajectories $\{\tau_i^b\}_{b=1}^B$ of size $H$ from $p(.|\theta_i^0)$\; \label{alg_line:gen_traj_init}
        Update critic networks using $\{\tau_i^b\}_{b=1}^B$\; \label{alg_line:update_critic_init}
        \BlankLine
        
        $u_i^0, v_i^0 \gets$ \EstimateGradients{$\{\tau_i^b\}_{b=1}^B$}\; \label{alg_line:estimate_grad_init}
        \BlankLine
        
        $x_i^1 \gets$ \UpdateTracking{$W, \{0, u_j^0, 0\}_{j \in \mathcal{N}_i}$}\; \label{alg_line:grad_track_init_1}
        $y_i^1 \gets$ \UpdateTracking{$W, \{0, v_j^0, 0\}_{j \in \mathcal{N}_i}$}\; \label{alg_line:grad_track_init_2}
        \BlankLine
        
        \For {$t\leftarrow 1$ \KwTo $T$} {
            Sample $B$ finite-horizon trajectories $\{\tau_i^b\}_{b=1}^B$ of size $H$ from $p(.|\theta_i^0)$\; \label{alg_line:gen_traj}
            Update critic networks using $\{\tau_i^b\}_{b=1}^B$\; \label{alg_line:update_critic}
            \BlankLine
            
            $u_i^t, v_i^t \gets$ \EstimateGradients{$\{\tau_i^b\}_{b=1}^B, \beta_1, \beta_2, u_i^{t-1}, v_i^{t-1}$}\; \label{alg_line:estimate_grad}
            \BlankLine
            
            $x_i^{t+1} \gets$ \UpdateTracking{$W, \{x_j^t, u_j^t, u_j^{t-1}\}_{j \in \mathcal{N}_i}$}\;
            \label{alg_line:grad_track_1}
            $y_i^{t+1} \gets$ \UpdateTracking{$W, \{y_j^t, v_j^t, v_j^{t-1}\}_{j \in \mathcal{N}_i}$}\;
            \label{alg_line:grad_track_2}
            \BlankLine
            
            $\theta_i^{t+1}, \lambda_i^{t+1} \gets $ \UpdateParams{$\eta_1, \eta_2, W, \{\theta_j^t, \lambda_j^t, x_j^{t+1}, y_j^{t+1}\}_{j \in \mathcal{N}_i}$}\; \label{alg_line:update_param_2}
        }
    }
    \KwRet{$\{\theta_i^{T+1}\}_{i \in \mathcal{N}}$}
\end{algorithm*}

The algorithm primarily consists of the initialization part and the main loop. The main loop consists of four parts.  The first part is the exploration defined in line \ref{alg_line:gen_traj}, which consists of the agent generating some finite trajectories from the given environment. Then, in the second step in line \ref{alg_line:update_critic}, we use these trajectories to update the critic network. Thirdly, in line \ref{alg_line:estimate_grad}, we generate gradient estimators for the Lagrangian function with respect to our $\theta$ and $\lambda$. And finally in lines \ref{alg_line:grad_track_1}, \ref{alg_line:grad_track_2} and \ref{alg_line:update_param_2}, we update the parameters ($\theta$ and $\lambda$) and communicate with neighbouring agents to establish consensus. The initialization process is very similar to the steps inside the main loop, where we generate some trajectories and estimate the initial gradients to initialize the tracking variables. It is noteworthy to mention that, each agent executes in parallel but needs to synchronize the gradient with other agents at each iteration of the main loop, specifically on lines \ref{alg_line:grad_track_1}, \ref{alg_line:grad_track_2} and \ref{alg_line:update_param_2}. 

In the following sections, we explain the gradient estimation process and the parameter update step in more detail.

\subsubsection{Gradient Estimation}
In light of Equation \ref{eq:multi_primal_dual}, the required gradients $\nabla_{\theta_i}\ J_i^L(\theta_i^t, \lambda_i^t)$ and $\nabla_{\lambda_i}\ J_i^L(\theta_i^t, \lambda_i^t)$ can be derived as follows,
\begin{align}
    \nabla_{\theta_i}\ J_i^L(\theta_i, \lambda_i) = \nabla_{\theta_i} J_i^{R}(\theta_i) + \lambda \nabla_{\theta_i} J_i^{C}(\theta_i) \label{eq:lagrange_grad_theta}\\
    \nabla_{\lambda_i}\ J_i^L(\theta_i, \lambda_i) = J_i^{C}(\theta_i) - c_i \label{eq:lagrange_grad_lambda}
\end{align}
For estimating $\nabla_{\theta_i}\ J_i^L(\theta_i, \lambda_i)$ we need to first estimate $\nabla_{\theta_i} J_i^{R}(\theta_i)$ and $\nabla_{\theta_i} J_i^{C}(\theta_i)$.

\paragraph{REINFORCE}
Without having full knowledge about the dynamics of the environment beforehand, we have no way to calculate the exact value of $\nabla_{\theta_i} J_i^{R}(\theta_i)$ and $\nabla_{\theta_i} J_i^{C}(\theta_i)$.  To estimate $\nabla_{\theta_i} J_i^{R}(\theta_i)$ and $\nabla_{\theta_i} J_i^{C}(\theta_i)$, we adopt the REINFORCE policy gradient estimator with baselines as discussed in Section \ref{sec:policy_grad}. Let's use the baselines $\hat{b}_i^R$ and $\hat{b}_i^C$ for $\nabla_{\theta_i} J_i^{R}(\theta_i)$ and $\nabla_{\theta_i} J_i^{C}(\theta_i)$ respectively. From equation \ref{eq:reinforce_ac} we can derive Equation \ref{eq:obj_grad_est_inf} and \ref{eq:util_grad_est_inf} ($\widehat{\nabla}$ is used to express a gradient estimate) where $\tau := (s_0, a_0, s_1, ..., )$ represents an infinite trajectory produced under the policy $\pi_{\theta_i}$. 
\begin{align}
    \widehat{\nabla}_{\theta} J_i^{R}(\theta_i) := \left[ \sum^{\infty}_{h=0} \nabla_{\theta_i} \log \pi_{\theta_i}(a^h|s^h)\right] \cdot \left[  \sum^{\infty}_{h=0}\gamma^h R_i(a^h, s^h) - \hat{b}_i^R\right] \label{eq:obj_grad_est_inf}\\
    \widehat{\nabla}_{\theta} J_i^{C}(\theta_i) := \left[ \sum^{\infty}_{h=0} \nabla_{\theta_i} \log \pi_{\theta_i}(a^h|s^h)\right] \cdot \left[  \sum^{\infty}_{h=0}\gamma^h C_i(a^h, s^h) - \hat{b}_i^C\right] \label{eq:util_grad_est_inf}
\end{align}
For the baseline, the state-value function is used. The state-value function for $R_i$ and $C_i$ may be defined as in Equation~\ref{eq:state_value_r} and \ref{eq:state_value_c}. $V_{i}^{R}(s)$ and $V_{i}^{C}(s)$ are usually approximated using separate neural networks, which are trained simultaneously with the policy network in lines \ref{alg_line:update_critic_init} and \ref{alg_line:update_critic}. Let us term these as \textit{critic networks} and denote them as $\widehat{V}_{i}^{R}(s)$ and $\widehat{V}_{i}^{C}(s)$.

\begin{align}
    V_{i}^{R}(s) := \underset{s^0 = s,\ \tau \sim p(.|\theta_i)}{\mathbb{E}} \left[\sum_{t=0}^\infty \gamma^t R_i(s^t, a^t) \right] \label{eq:state_value_r}\\
    V_{i}^{C}(s) := \underset{s^0 = s,\ \tau \sim p(.|\theta_i)}{\mathbb{E}} \left[\sum_{t=0}^\infty \gamma^t C_i(s^t, a^t) \right] \label{eq:state_value_c}
\end{align}

Now, although Equations \ref{eq:obj_grad_est_inf} and \ref{eq:util_grad_est_inf} give us a reasonable estimate of the required gradients, we cannot calculate them in practice due to requiring an infinite trajectory. Therefore, we use finite horizon approximation with horizon $H$. Moreover, to further reduce variance, we can use a minibatch of $B$ trajectories $\{\tau_i^b\}_{b=1}^B$ instead of a single trajectory and take the average gradient over them. As a result, with mini-batch and finite-horizon estimation, Equations \ref{eq:obj_grad_est_inf} and \ref{eq:util_grad_est_inf} become

\begin{align}
    \widehat{\nabla}_{\theta} J_i^{R}(\theta_i) := \frac{1}{B} \sum_{b=1}^B\ \left[ \sum^{H-1}_{h=0} \nabla_{\theta_i} \log \pi_{\theta_i}(a^{h, b}|s^{h, b})\right] \cdot
    \left[  \sum^{H-1}_{h=0}\gamma^h R_i(a^{h, b}, s^{h, b}) - \widehat{V}_{i}^{R}(s^{h, b})\right] \label{eq:obj_grad_est}\\
    \widehat{\nabla}_{\theta} J_i^{C}(\theta_i) := \frac{1}{B} \sum_{b=1}^B\ \left[ \sum^{H-1}_{h=0} \nabla_{\theta_i} \log \pi_{\theta_i}(a^{h, b}|s^{h, b})\right] \cdot
    \left[  \sum^{H-1}_{h=0}\gamma^h C_i(a^{h, b}, s^{h, b}) - \widehat{V}_{i}^{C}(s^{h, b})\right] \label{eq:util_grad_est}
\end{align}

Therefore, according to Equation \ref{eq:lagrange_grad_theta} we can calculate the gradient estimate $\widehat{\nabla}_{\theta_i}\ J_i^L(\theta_i, \lambda_i)$ as,

\begin{align}
    \widehat{\nabla}_{\theta_i}\ J_i^L(\theta_i, \lambda_i) = \widehat{\nabla}_{\theta_i} J_i^{R}(\theta_i) + \lambda \widehat{\nabla}_{\theta_i} J_i^{C}(\theta_i). \label{eq:lagrange_grad_theta_est}
\end{align}

On the other hand, we can also calculate an estimate of $\nabla_{\lambda_i}\ J_i^L(\theta_i^t, \lambda_i^t)$ by calculating an approximation of $J_i^{C}(\theta_i)$ using the minibatch and finite-horizon techniques similar to Equation \ref{eq:obj_grad_est} and \ref{eq:util_grad_est}. Given a batch of $B$ finite-horizon trajectories of size $H$ $\{\tau_i^b\}_{b=1}^B$, we can define $\nabla_{\lambda_i}\ J_i^L(\theta_i^t, \lambda_i^t)$ as in Equation~\ref{eq:lagrange_grad_lambda_est}, where $\widehat{J_i^{C}}(\theta_i)$ is used to indicate an estimate of $J_i^{C}(\theta_i)$.
\begin{align}
    \widehat{J_i^{C}}(\theta_i) := \frac{1}{B}\sum_{b=1}^{B}\sum_{h = 0}^{H-1} \gamma^h C_i (s^{h, b}, a^{h, b})\\
    \widehat{\nabla}_{\lambda_i}\ J_i^L(\theta_i, \lambda_i) := \widehat{J_i^{C}}(\theta_i) - c_i \label{eq:lagrange_grad_lambda_est}
\end{align}

\paragraph{Momentum Based Variance Reduction}
If the estimates from Equations \ref{eq:lagrange_grad_theta_est} and \ref{eq:lagrange_grad_lambda_est} are directly applied in the gradient descent process (discussed in the next section) the high variance results in slow convergence. To further reduce the variance, we apply the momentum-based variance reduction technique \cite{cutkosky2019momentum, tran2019hybrid}. The momentum technique ensures the stability of the gradient estimates by smoothing its values across multiple learning steps. Let us use $u_i^t$ and $v_i^t$ to denote the REINFORCE based gradient estimates of $\nabla_{\theta_i}\ J_i^L(\theta_i, \lambda_i)$ and $\nabla_{\lambda_i}\ J_i^L(\theta_i, \lambda_i)$ respectively at time $t$. We use $\widehat{u}_i^t$ and $\widehat{v}_i^t$ to represent their momentum-based counterparts. Now, the modified gradient estimator according to the aforementioned technique can be written as:
\begin{align}
    \widehat{u}_i^t &:= \beta_1\, u_i^t + (1-\beta_1) (\widehat{u}_i^{t-1} + u_i^t - u_i^{t-1})\\
    \widehat{v}_i^t &:= \beta_2\, v_i^t + (1-\beta_2) (\widehat{v}_i^{t-1} + v_i^t - v_i^{t-1})
\end{align}
Here $\beta_1,\beta_2 \in (0, 1]$ represents the momentum parameters. For this method to work properly, the gradient estimators need to be unbiased. However, in Equations \ref{eq:obj_grad_est}, \ref{eq:util_grad_est}, the distribution $p(.|\theta_i^t)$ controls the sampled trajectory $\tau_i^t$. It is clear that the estimator $g_i(\tau_i^t|\theta_i^{t-1})$ is biased with respect to $\nabla V(\theta_i^{t-1})$. We can use the importance sampling technique to ensure the unbiased property
\begin{equation}
    \mathbb{E}_{\tau_i^t\mathtt{\sim} p(\cdot|\theta_i^t)}\{\omega(\tau_i^t|\theta_i^{t-1},\theta_i^{t})g_i(\tau_i^t|\theta_i^{t-1}) \} = \nabla V_i(\theta_i^{t-1})
\end{equation}

Here, importance weight is denoted as $\omega(\tau_i^t|\theta_i^{t-1},\theta_i^{t-1})$. It's defined as:
\begin{equation}
    \omega(\tau_i^t|\theta_i^{t-1},\theta_i^{t}) := \frac{p(\tau_i^t|\theta_i^{t-1})}{p(\tau_i^t|\theta_i^t)} = \prod_{h=1}^{H-1} \frac{\pi_{\theta_i^{t-1}}(a^h|s^h)}{\theta_i^{t}(a^h|s^h)}
\end{equation}

So, the momentum-based variance reduction for the policy gradient of the i-th agent is given by:
\begin{align}
    \widehat{u}_i^t &:= \beta_1\,u_i^t + (1-\beta_1) (\widehat{u}_i^{t-1} + u_i^t - \omega(\theta_i^{t-1},\theta_i^{t})\cdot u_i^{t-1} ) \label{eq:momentum_obj_grad}\\
    \widehat{v}_i^t &:= \beta_2\,v_i^t + (1-\beta_2)(\widehat{v}_i^{t-1} + v_i^t - \omega(\theta_i^{t-1},\theta_i^{t})\cdot v_i^{t-1}) \label{eq:momentum_util_grad}
\end{align}

In light of the above discussion, the function $EstimateGradients$ used in lines \ref{alg_line:estimate_grad_init} and \ref{alg_line:estimate_grad} can be defined as in Procedure~\ref{proc:estimate_gradients}.\\

\begin{procedure} [H]
\caption{EstimateGradients()} \label{proc:estimate_gradients}
    \KwData{$B$ finite-horizon trajectories of size $H$ $\{\tau_i^b\}^B_{b=1}$}
    \KwResult{Gradient estimates of $\nabla_{\theta_i}\ J_i^L(\theta_i^t, \lambda_i^t)$ and $\nabla_{\lambda_i}\ J_i^L(\theta_i^t, \lambda_i^t)$}
    Calculate REINFORCE gradient estimates $\widehat{\nabla}_{\theta} J_i^{R}(\theta_i)$ and $\widehat{\nabla}_{\theta} J_i^{C}(\theta_i)$ from $\{\tau_i^b\}^B_{b=1}$ using Equations \ref{eq:obj_grad_est} and \ref{eq:util_grad_est} respectively\;
    Calculate $\widehat{\nabla}_{\theta_i}\ J_i^L(\theta_i, \lambda_i)$ and $\widehat{\nabla}_{\lambda_i}\ J_i^L(\theta_i, \lambda_i)$ using Equations \ref{eq:lagrange_grad_theta_est} and \ref{eq:lagrange_grad_lambda_est}, and denote them as $u_i^t$ and $v_i^t$ respectively\;
    Calculate momentum based estimates $\widehat{u}_i^t$ and $\widehat{v}_i^t$ using Equations \ref{eq:momentum_obj_grad} and \ref{eq:momentum_util_grad} respectively\;
    \KwRet{$\widehat{u}_i^t, \widehat{v}_i^t$}
\end{procedure}

\subsubsection{Decentralized Optimization}
The Expression \ref{eq:multi_primal_dual} would be straightforward to solve if it were in a centralized setting. However, our setting is decentralized, where the agents are connected by a communication network. 

We can reformulate the problem as a decentralized min-max optimization problem of the form \ref{eq:multi_primal_dual_short} where $J_i^{L}(\theta_i, \lambda_i) = J_i^{R}(\theta_i) + \lambda_i(J_i^{C}(\theta_i) - c)$ and $\mathcal{N}_i$ denotes the set of neighbors of agent $i$.
\begin{align}
    \underset{\{\theta_i \in \Theta\}}{\text{max}}\ 
    \underset{\{\lambda_i \ge 0\}}{\text{min}}\ 
    \frac{1}{n} \sum_{i=1}^{n} J_i^{L}(\theta_i, \lambda_i) \label{eq:multi_primal_dual_short}\\
    \text{s.t.}\quad \theta_i = \theta_j\quad j \in \mathcal{N}_i,\ \forall i \nonumber
\end{align}
This is similar to the problem discussed in Section~\ref{sec:back_dec_op}. So we can use stochastic gradient descent ascent and gradient tracking as discussed there, resulting in the update steps,

\begin{align}
    x_i^{t+1} := \sum_{j \in \mathcal{N}_i}\ W_{ij} (x_j^t + \nabla_{\theta_i}\ J_i^L(\theta_i^t, \lambda_i^t) - \nabla_{\theta_i}\ J_i^L(\theta_i^{t-1}, \lambda_i^{t-1})), \label{eq:theta_tracking}\\
    y_i^{t+1} := \sum_{j \in \mathcal{N}_i}\ W_{ij} (y_j^t + \nabla_{\lambda_i}\ J_i^L(\theta_i^t, \lambda_i^t) - \nabla_{\lambda_i}\ J_i^L(\theta_i^{t-1}, \lambda_i^{t-1})), \label{eq:lambda_tracking}\\
    \theta_i^{t+1} := \sum_{j \in \mathcal{N}_i}\ W_{ij} (\theta_j^t + \eta_1 x_j^{t+1}), \label{eq:theta_update}\\
    \lambda_i^{t+\frac{1}{2}} := \sum_{j \in \mathcal{N}_i}\ W_{ij} (\lambda_j^t - \eta_2 y_j^{t+1}),\quad
    \lambda_i^{t+1} := \mathcal{P}_\Lambda(\lambda_i^{t+\frac{1}{2}}). \label{eq:lambda_update}
\end{align}

Here, $x_i$, $y_i$ are the gradient tracking variables for $\theta_i$ and $\lambda_i$, respectively. Note that Equation \ref{eq:lambda_update} consists of a projection step to ensure that the updated $\lambda$ stays in the non-negative $\Lambda$ space.

Lines \ref{alg_line:grad_track_1} and \ref{alg_line:grad_track_2} in the main loop, as well as lines \ref{alg_line:grad_track_init_1} and \ref{alg_line:grad_track_init_2} in the initialization part constitute the gradient tracking part. While line \ref{alg_line:update_param_2} refers to the actual parameter update step. The procedures $UpdateTracking$ and $UpdateParams$ may be formally defined as in Procedure~\ref{proc:update_tracking}.\\

\begin{procedure} [H]
\caption{UpdateTracking()} \label{proc:update_tracking}
    \KwData{weight matrix $W$, \{current tracking variable $x_j^t$, current gradient estimate $u_j^t$, previous gradient estimate $u_j^{t-1}$\} $\forall j \in \mathcal{N'}$ (where $\mathcal{N'} \subset \mathcal{N}$)}
    \KwResult{next tracking variable for current agent $x_i^{t+1}$}

    $x_i^{t+1} \gets \sum_{j \in \mathcal{N'}}\ W_{ij} (x_j^t + u_j^t - u_j^{t-1})$\;

    \KwRet{$x_i^{t+1}$}\;
    
\end{procedure}

\hfill \break

\begin{procedure} [H]
\caption{UpdateParams()} \label{proc:update_params}
    \KwData{primal learning parameter $\eta_1$, dual learning parameter $\eta_2$, weight matrix $W$, \{ current parameter primal parameter $\theta_j^t$, current dual parameter $\lambda_j^t$, next primal tracking variable $x_j^{t+1}$, next dual tracking variable $y_j^{t+1}$ \} $\forall j \in \mathcal{N'}$ (where $\mathcal{N'} \subset \mathcal{N}$)}
    \KwResult{next primal parameter for current agent $\theta_i^{t+1}$, next dual parameter for current agent $\lambda_i^{t+1}$}

    $\theta_i^{t+1} \gets \sum_{j \in \mathcal{N}_i}\ W_{ij} (\theta_j^t + \eta_1 x_j^{t+1})$\;
    $\lambda_i^{t+\frac{1}{2}} \gets \sum_{j \in \mathcal{N'}}\ W_{ij} (\lambda_j^t - \eta_2 y_j^{t+1})$\;
    $\lambda_i^{t+1} \gets \mathcal{P}_\Lambda(\lambda_i^{t+\frac{1}{2}})$\;

    \KwRet{$\theta_i^{t+1}, \lambda_i^{t+1}$}
\end{procedure}

\section{Theoretical Analysis}
\label{chap:theoreticalProof}
In this section, we present a theoretical analysis of DePAint. If we recap our algorithm, we convert the problem into an unconstrained min-max problem over $J^L(\theta, \lambda)$ and apply a momentum-based stochastic gradient ascent approach with gradient tracking. Before implying anything about the convergence of this approach, we give a set of standard assumptions.\\

To begin with, we assume that the long-term discounted reward, utility, and peak violation functions satisfy a Lipschitz continuous condition (before augmentation to incorporate peak constraints), where the long-term peak violation function can be defined as, 
\begin{align}
J_i^I(\theta_i) = \underset{\tau \sim p(.|\theta_i)}{\mathbb{E}} \left[ \sum_{t=0}^{\infty} \gamma^t I_i(s^t, a^t) \right]
\end{align}
\begin{assumption}
\label{ass:lip_cont}
    Functions $J_i^R(\theta_i)$, $J_i^C(\theta_i)$ and $J_i^I(\theta_i)$ have Liptschitz continuity with respect to $\theta_i, \forall i$.
\end{assumption}

Next, we assume that the graph underlying the communication network is well-connected so that the consensus step can be performed in a decentralized way.
\begin{assumption}
\label{ass:graph_con}
    The graph is strongly connected, i.e., $W$ is a double stochastic matrix such that $\lambda_{max}(W) \in [0, 1)$, where $\lambda_{max}(W)$ signifies the second largest eigenvalue of the weight matrix $W$.
\end{assumption}

Lastly, we assume that the variance incurred due to importance sampling is bounded.
\begin{assumption}
\label{ass:var_imp_samp}
    The variance of importance sampling weight $\omega(\tau|x_1,x_2)$ is bounded, i.e., there exists a constant $\mathcal{M} > 0$ such that $\text{Var}(\omega(\tau|x_1,x_2)) \leq \mathcal{M}$, for any $x_1,x_2 \in \mathbb{R}^{d_i}$ and $\tau \sim p(.|x_1)$.
\end{assumption}
\hfill

Now, we can infer from Assumption~\ref{ass:lip_cont} that $J_i^L(\theta_i, \lambda_i)$ before augmenting $C$, will also hold the L-Lipschitz continuity property with respect to $\theta_i$ and $\lambda_i$ as it is merely a linear combination of $J_i^C(\theta_i)$ and $J_i^R(\theta_i)$. However, we claim that this property holds after augmenting $C$ as well.

\begin{lemma}
\label{lemma:after_peak_cont}
    If Assumption~\ref{ass:lip_cont} holds, $J_i^{\widehat{C}}(\theta_i, \lambda_i)$ is also Lipschitz continuous with respect to $\theta_i, \lambda_i, \forall i$
\end{lemma}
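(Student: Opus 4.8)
The plan is to reduce the claim to the closure of Lipschitz continuity under linear combinations, exploiting that the peak-constraint augmentation enters the long-term utility only additively. First I would apply linearity of expectation to the definition of the augmented long-term utility: since $\widehat{C}_i(s,a) = C_i(s,a) - M\,I_i(s,a)$ and both the discounted sum and the expectation over $p(\cdot|\theta_i)$ are linear operations, we obtain
\[
    J_i^{\widehat{C}}(\theta_i) = J_i^{C}(\theta_i) - M\,J_i^{I}(\theta_i),
\]
where $J_i^{I}(\theta_i)$ is precisely the long-term peak-violation function defined immediately before Assumption~\ref{ass:lip_cont}. This decomposition is the heart of the argument, because it rewrites the object of interest as a fixed linear combination of two functions that the assumption already controls.

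Next I would invoke Assumption~\ref{ass:lip_cont} to supply Lipschitz constants $L_C$ and $L_I$ for $J_i^{C}(\theta_i)$ and $J_i^{I}(\theta_i)$, and then use the elementary fact that if $f$ is $L_f$-Lipschitz and $g$ is $L_g$-Lipschitz then $af + bg$ is $(|a|L_f + |b|L_g)$-Lipschitz. Taking $a = 1$ and $b = -M$ shows $J_i^{\widehat{C}}(\theta_i)$ is Lipschitz in $\theta_i$ with constant $L_C + M L_I$. To recover the joint Lipschitz continuity in $(\theta_i,\lambda_i)$ for the corresponding Lagrangian term $\lambda_i\big(J_i^{\widehat{C}}(\theta_i) - c_i\big)$, I would additionally note that $\lambda_i$ ranges over the restricted non-negative domain $\Lambda$ on which it is bounded, and that $J_i^{\widehat{C}}(\theta_i)$ is itself bounded (finite per-step utilities over a discounted horizon); the product of a bounded Lipschitz function with a bounded scalar is Lipschitz on the product domain, so the combined Lagrangian inherits the property.

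The step I would flag as the only genuine subtlety is that the per-step indicator $I_i(s,a)$ is a discontinuous step function, so one might fear the augmentation destroys continuity altogether. The resolution — and the reason the lemma is true — is that Assumption~\ref{ass:lip_cont} places its hypothesis on the \emph{expected} discounted quantity $J_i^{I}(\theta_i)$ rather than on $I_i$ pointwise: integrating against the smoothly parameterized trajectory distribution $p(\cdot|\theta_i)$ averages out the pointwise jumps, and the assumption is stated precisely to cover this averaged function. Consequently no separate regularity argument for $I_i$ is required, and the proof collapses to the linearity decomposition above combined with the stability of Lipschitz continuity under linear combinations and bounded products.
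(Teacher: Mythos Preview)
Your proposal is correct and follows essentially the same route as the paper: decompose $J_i^{\widehat{C}} = J_i^{C} - M J_i^{I}$ by linearity of expectation and then apply Assumption~\ref{ass:lip_cont} together with closure of Lipschitz continuity under linear combinations. The paper's proof is exactly this triangle-inequality computation (it records the constant as $L_C + MHL_I$ with an extra horizon factor) and treats the $\lambda_i$ variable trivially since $J_i^{\widehat{C}}$ does not actually depend on it, so your additional discussion of the Lagrangian product and bounded $\lambda_i$ is more than the lemma strictly requires but does no harm.
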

\begin{proof}
In order to prove Lemma~\ref{lemma:after_peak_cont}, it is enough to prove that $J_i^{\widehat{C}}(\theta_i)$ is Liptchitz continuous with respect to $\theta_i$. Given a bounded horizon of size $H$, we get Equation \ref{eq:lip_cont_proof} for all $x, y \in \mathbb{R}^{d_i}$, where $L_C$ and $L_I$ are the Lipchitz constants for $J_i^C$ and $J_i^I$ respectively.
\begin{align}
\label{eq:lip_cont_proof}
    ||J_i^{\widehat{C}}(x) - J_i^{\widehat{C}}(y)|| &\leq ||J_i^C(x) - J_i^C(y)|| + M H ||J_i^I(x) - J_i^I(y)|| \\
     &\leq L_C||x-y|| + M H L_I||x-y|| \nonumber\\
    &\leq (L_C + M H L_I)||x-y|| \nonumber
\end{align}
So, we can concur that $J_i^{\widehat{C}}(\theta_i)$ is L-Liptchitz continuous with respect to $\theta_i$, where $L=(L_C + M H L_I)$. As a result, $J_i^{\widehat{C}}(\theta_i, \lambda_i)$ is also Lipschitz continuous with respect to $\theta_i, \lambda_i$.
\end{proof}

\begin{theorem}[Sampling Complexity]
\label{theo:convergence}
    If Assumptions \ref{ass:lip_cont}, \ref{ass:graph_con} and \ref{ass:var_imp_samp} hold, the algorithm converges and the sampling complexity to reach an $\epsilon$-stationary point is $\mathcal{O}(N^{-1}\epsilon^{-3})$.
\end{theorem}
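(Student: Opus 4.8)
The plan is to follow the Lyapunov-descent template for momentum-based decentralized stochastic gradient methods, adapting the analysis of \cite{xian2021faster} to the primal-dual (min-max) problem in Expression~\ref{eq:multi_primal_dual_short}. First I would collect the structural facts that make the template applicable. By Lemma~\ref{lemma:after_peak_cont} together with Assumption~\ref{ass:lip_cont}, the augmented Lagrangian $J_i^L$ is $L$-smooth in $\theta_i$ (the dependence on $\lambda_i$ being linear, hence trivially smooth once $J_i^C$ is bounded over the finite horizon $H$), with smoothness inherited from the constant $L_C+MHL_I$ of Lemma~\ref{lemma:after_peak_cont}. By Assumption~\ref{ass:graph_con} the doubly-stochastic $W$ has a positive spectral gap $1-\lambda_{max}(W)>0$. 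By Assumption~\ref{ass:var_imp_samp} the importance weights $\omega$ that restore unbiasedness of the momentum estimators in Equations~\ref{eq:momentum_obj_grad}--\ref{eq:momentum_util_grad} have variance bounded by $\mathcal{M}$.

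Next I would introduce the averaged iterates $\bar\theta^t=\frac1N\sum_i\theta_i^t$ and $\bar\lambda^t=\frac1N\sum_i\lambda_i^t$ and isolate the four error quantities the proof must simultaneously control: the primal and dual consensus errors $\sum_i\|\theta_i^t-\bar\theta^t\|^2$, $\sum_i\|\lambda_i^t-\bar\lambda^t\|^2$; the gradient-tracking errors measuring the gap between $x_i^t,y_i^t$ and the true averaged gradients; and the estimator errors $\|\widehat u_i^t-\nabla_{\theta_i}J_i^L\|^2$, $\|\widehat v_i^t-\nabla_{\lambda_i}J_i^L\|^2$. The three core recursions are: (i) a geometric contraction of the consensus and tracking errors at rate $\lambda_{max}(W)$, obtained by applying double-stochasticity of $W$ to Equations~\ref{eq:theta_tracking}--\ref{eq:lambda_update}; (ii) a STORM-type variance recursion $\E\|\widehat u_i^t-\nabla_{\theta_i}J_i^L\|^2\le(1-\beta_1)^2\E\|\widehat u_i^{t-1}-\nabla_{\theta_i}J_i^L\|^2+O(\beta_1^2\mathcal{M}/B)+O(L^2\|\bar\theta^t-\bar\theta^{t-1}\|^2)$, where Assumption~\ref{ass:var_imp_samp} controls the variance introduced by $\omega$; and (iii) a descent inequality on a surrogate of the primal function (a Moreau-envelope reformulation is natural, since the problem is nonconvex in $\theta$ and linear, hence concave, in $\lambda$), driven by the smoothness constant of Lemma~\ref{lemma:after_peak_cont}.

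Then I would assemble a single potential $\Phi^t$ as a positively weighted sum of $-J^L(\bar\theta^t,\bar\lambda^t)$, the two consensus errors, the two tracking errors, and the accumulated estimator variances, and show $\E[\Phi^{t+1}]\le\E[\Phi^t]-c\,\eta_1\E\|\nabla_\theta J^L(\bar\theta^t,\bar\lambda^t)\|^2+(\text{higher-order terms})$ for some $c>0$, provided $\eta_1,\eta_2,\beta_1,\beta_2$ are small enough to dominate the cross terms between the recursions. Telescoping over $t=1,\dots,T$ and dividing by $T$ bounds the average stationarity measure by $O(\Phi^1/(\eta_1 T))$ plus a residual variance. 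The standard STORM balancing $\eta_1,\eta_2=\Theta(T^{-1/3})$ and $\beta_1,\beta_2=\Theta(T^{-2/3})$ then gives an $O(T^{-1/3})$ rate, so $T=O(\epsilon^{-3})$ iterations suffice; since each agent averages a mini-batch of $B$ trajectories and gradient tracking averages the per-agent variances over the $N$ nodes, the effective variance scales like $\mathcal{M}/(NB)$, yielding the $N^{-1}$ linear speedup and the claimed $\mathcal{O}(N^{-1}\epsilon^{-3})$ sampling complexity.

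The hardest part will be step (iii) together with the construction of $\Phi^t$: unlike the single-objective minimization of \cite{xian2021faster}, the potential must simultaneously track primal ascent and dual descent through the projection $\mathcal{P}_\Lambda$ in Equation~\ref{eq:lambda_update}, so the descent lemma must be stated for a min-max stationarity measure (for instance the norm of the projected gradient mapping), and the coupling between the primal consensus error, the dual consensus error, and the two tracking errors must all be absorbed by the spectral-gap contraction without degrading the $N^{-1}$ factor. Verifying that one common choice of $(\eta_1,\eta_2,\beta_1,\beta_2)$ renders the combined coefficient matrix of the coupled error recursions a contraction is the principal technical bottleneck.
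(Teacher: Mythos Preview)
Your proposal is methodologically sound but takes a very different route from the paper. The paper's argument is essentially a two-line reduction: having established via Lemma~\ref{lemma:after_peak_cont} that the augmented Lagrangian $J_i^L$ remains Lipschitz continuous after the peak-constraint augmentation, it simply invokes the convergence theorem of Jiang et al.\ (2022) \cite{jiang2022mdpgt} for decentralized momentum-based policy gradient with gradient tracking, which already delivers the $\mathcal{O}(N^{-1}\epsilon^{-3})$ sampling complexity for bounded step sizes. In contrast, you propose to re-derive that machinery from scratch by adapting the Lyapunov-potential template of \cite{xian2021faster}: building the coupled recursions for consensus, tracking, and STORM estimator errors, assembling a composite potential $\Phi^t$, and balancing $(\eta_1,\eta_2,\beta_1,\beta_2)$ to extract the $T^{-1/3}$ rate. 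What the paper's approach buys is brevity---it offloads all the technical work to an existing result. What your approach buys is self-containment and, importantly, an explicit treatment of the primal-dual structure: you correctly flag that the projection $\mathcal{P}_\Lambda$ in Equation~\ref{eq:lambda_update} and the simultaneous ascent/descent require a min-max stationarity measure and a careful coupling argument, a point the paper's citation to \cite{jiang2022mdpgt} (which is stated for a single-objective policy gradient) does not obviously cover. Your route is therefore more laborious but arguably more complete; the paper's route is much shorter but leans heavily on the cited result applying verbatim to the Lagrangian min-max setting.
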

\begin{proof}
    Now, considering we have a Lipchitz continuous objective function, we can prove that a decentralized momentum-based stochastic gradient tracking algorithm converges. More specifically according to Jiang et al. (2022) \cite{jiang2022mdpgt}, for a bounded step size, it has a sampling complexity of $\mathcal{O}(N^{-1}\epsilon^{-3})$ to reach an $\epsilon$-stationary point (i.e. $\mathbb{E}[||J_i^L(x)||] < \epsilon$), where $N$ is the total number of agents.
\end{proof}
\begin{corollary}[Network Topology Invariance]
    From Theorem~\ref{theo:convergence}, we can infer that the sampling complexity is independent of the network topology.
\end{corollary}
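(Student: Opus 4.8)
The plan is to argue by direct inspection of the complexity bound delivered by Theorem~\ref{theo:convergence}. The key observation is that network topology can only enter the analysis through the spectral properties of the weight matrix $W$, specifically through the quantity $\lambda_{max}(W)$ appearing in Assumption~\ref{ass:graph_con}, since $W$ is the only object in the algorithm that encodes which agents are connected to which. So the entire question reduces to asking whether $\lambda_{max}(W)$ (or equivalently the spectral gap $1 - \lambda_{max}(W)$) appears in the dominant term of the sampling complexity. First, I would recall from Theorem~\ref{theo:convergence} that the sampling complexity to reach an $\epsilon$-stationary point is $\mathcal{O}(N^{-1}\epsilon^{-3})$, and note that this expression involves only the number of agents $N$ and the target accuracy $\epsilon$ — no term involving $W$ or the graph $G$ survives into the leading order.

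Next, I would trace through the convergence result of Jiang et al.~\cite{jiang2022mdpgt} that underlies Theorem~\ref{theo:convergence}. In such decentralized gradient-tracking analyses, the spectral gap $1 - \lambda_{max}(W)$ typically governs how fast the consensus error between the local copies $\theta_i$ decays, and it therefore shows up inside the constants multiplying the bound or in the transient/lower-order terms of the error recursion. The argument I would give is that, once the step sizes $\eta_1, \eta_2$ are chosen within the admissible (bounded) range guaranteed by the theorem, the contribution of the consensus error is dominated by the stochastic gradient variance term, so that the spectral-gap-dependent quantities are absorbed into the $\mathcal{O}(\cdot)$ constant and do not influence the exponents of $N$ or $\epsilon$ in the leading-order sampling complexity. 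Consequently, two networks with different topologies — say a sparse ring versus a dense mesh — but the same number of agents yield the same dominant-order sampling complexity $\mathcal{O}(N^{-1}\epsilon^{-3})$, establishing the claimed invariance.

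The hard part will be making precise the claim that $\lambda_{max}(W)$ is confined strictly to non-dominant terms. A careless reading of $\mathcal{O}(N^{-1}\epsilon^{-3})$ hides all constants, and if the constant scaled like $1/(1 - \lambda_{max}(W))$ then a nearly-disconnected topology (with $\lambda_{max}(W) \to 1$) could blow up the effective sample count in practice even while the asymptotic rate in $\epsilon$ is unchanged. To be fully rigorous I would therefore need to appeal to the explicit step-size schedule and constant bounds in~\cite{jiang2022mdpgt} and verify that the spectral gap enters only as a multiplicative constant on a term of order lower than $\epsilon^{-3}$ (or that the admissible step size merely has to be chosen small enough as a function of the gap, without altering the iteration-to-sample conversion). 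This is the main obstacle, and the honest scope of the corollary is asymptotic invariance in the exponents of $N$ and $\epsilon$, rather than literal independence of every constant factor from the topology.
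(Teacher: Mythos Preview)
Your proposal is correct and follows the same logic as the paper: the network topology enters only through the weight matrix $W$, and since the sampling complexity $\mathcal{O}(N^{-1}\epsilon^{-3})$ from Theorem~\ref{theo:convergence} contains no dependence on $W$, the bound is topology-invariant. The paper's own proof is a single sentence making exactly this observation, without the additional discussion you provide about where the spectral gap might hide in the constants; your more careful treatment of that caveat is a welcome addition, not a deviation.
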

\begin{proof}
    Moreover, as the sampling complexity is not dependent on $W$, we can also state that the convergence of the algorithm is independent of the network topology.
\end{proof}

\section{Empirical Evaluation}
\label{sec:empiricalEval}
In this section, we present the empirical evaluation of our algorithm, encompassing experiment settings to outline the test environment, implementation details, model parameters to elucidate our algorithm's configuration, and results to provide a comprehensive overview of its performance.

\subsection{Experiment Settings}
\subsubsection{Environment}
\label{sec:environment}
 For our empirical analysis, we chose two specific environments.  The first one is a modified version of the cooperative navigation environment for our empirical analysis. As a benchmark multi-agent environment, the cooperative navigation environment has been modified in several previous works, such as \cite{zhang2018fully,jiang2022mdpgt} to be compatible with the collaborative RL setting. Here, there are $n$ agent and landmark pairs. Each agent is required to find its corresponding landmark. The agents can take five actions at each step: up, down, left, right, or stay. The received reward of each agent is $-(\text{distance to landmark}) - \sum \textbf{1}_{\text{\{if colliding with an agent\}}}$. The long-term utility received by each agent is the distance of the closest agent. Meanwhile, the peak constraint is a constraint on its position. It is usually a bounding box that contains all the landmarks and agents.  A pictorial representation of the environment is shown in Figure~\ref{fig:coopnav_env}.

Another environment is a modified version of the predator-prey environment. Here, there are two types of agents, predators and preys. The predators gain a reward of $+1$ for colliding with a prey, while the prey gets a $-1$ reward (penalty) when it collides with a predator. The long-term utility received by each agent is the distance of the closest agent of the same type. The peak constraint is similar to the cooperative navigation environment. This environment is illustrated in Figure~\ref{fig:predprey_env}.

\begin{figure}[h]
    \centering
    \includegraphics[width=0.65\textwidth]{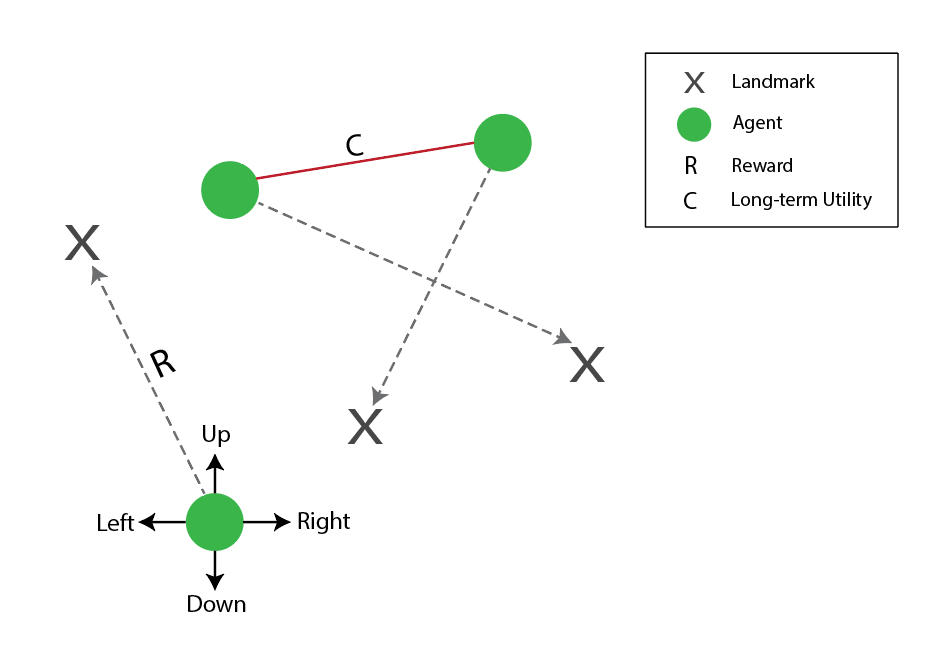}
    
    \caption{Pictorial representation of the cooperative navigation environment}
    \label{fig:coopnav_env}
\end{figure}
\begin{figure}[h]
    \centering
    \includegraphics[width=0.65\textwidth]{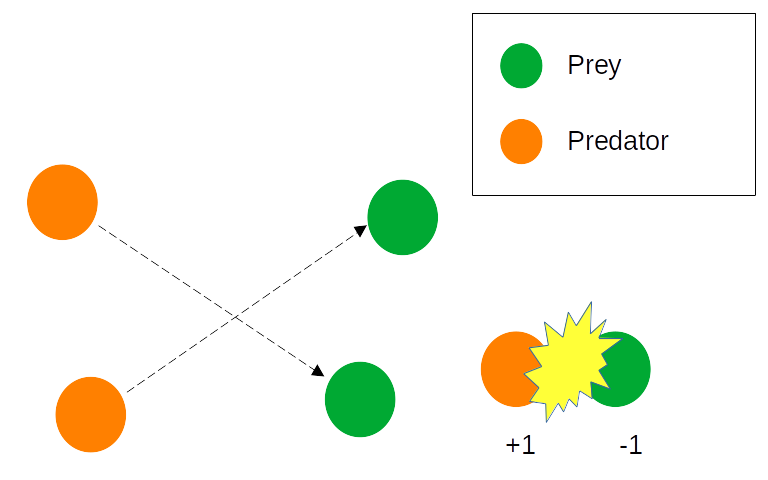}
    
    \caption{Pictorial representation of the predator-prey environment }
    \label{fig:predprey_env}
\end{figure}

\subsubsection{Implementation}
\label{sec:implementation}
We use the PyTorch framework in Python to evaluate our algorithm outlined in Section~\ref{sec:algorithm}. Gradient estimation of the neural networks for the policy and value networks is done using PyTorch's automatic differentiation engine. The test environment is built using the OpenAI Gym framework. The experiments were run on a system with NVIDIA Titan Xp 12 GB GPU, AMD Ryzen 5 3600 6-Core Processor, and 32 GB of RAM. The hyper-parameters chosen for our implementation are detailed below:

\paragraph{Parameters}
\label{ssec:parameters}
Each agent's policy is parameterized by a neural network, where there are two hidden layers, with the first one having 128 neurons and the second having 64 neurons. The final layer is a softmax layer. The dimension of the input layer is 20, and the output layer’s is 5, to match the dimensions of the observation space and the action space, respectively. We also use a critic network as our baseline, with two hidden layers similar to the policy network. The input and output dimensions of the critic network are 20 and 1. The discount factor $\gamma$ for all tests is 0.99. The learning rates $\eta_1$, $\eta_2$ and the critic learning rate are $0.0003$, $0.001$ and $0.001$ respectively. $\eta_1$ is set lower than the other two learning rates so that the descent induced by it does not overpower the ascent induced by $\eta_2$. The $\beta$ parameter is set to $0.2$. $c_i$ for all agents is set to 10.  All experiments were conducted using 5 random seeds each, averaging the results to get the final graph. 

\subsubsection{Experiments}
\label{sec:experiments}

We have run several experiments, to evaluate the performance and efficiency of the algorithm as well as compare it which can be divided into three parts. The first part includes experiments to analyze how well the algorithm performs in different topology graphs and with varying numbers of agents. In particular, six experiments were run for each environment.
For the cooperative navigation environment, the experiments were conducted with $n=3$, $4$, and $5$ agents, respectively, where the agents were connected using one of three different network topologies, namely ring, fully-connected, and bi-partite (as shown in Figure~\ref{fig:graphs}). The results of these experiments are illustrated in Figure~\ref{fig:exp_test_coopnav}. For the predator-prey environment, similar experiments were done with the following settings: 1 predator 1 prey, 2 predators 1 prey, and 3 predators 2 prey. The results are depicted in Figure~\ref{fig:exp_test_predprey}.

In the second part, we systematically evaluated the performance of our algorithm DePaint in comparison to the established CMIX algorithm across varying agent populations. Our investigation focused on the algorithm's convergence behavior and efficiency as the number of agents increased while preserving consistent initial agent configurations. We run both algorithms with $3$, $4$, and $5$ agents connected in a ring topology on both environments. These results are illustrated in Figure~\ref{fig:exp_compare}.

For the final part, we conducted an ablation study to show the efficacy of the momentum-based variance reduction. Here we compared two versions of our algorithm, one with momentum-based variance reduction and one without. The experiment was conducted in the co-operative navigation environment with $n=5$ agents connected using the ring topology. The findings are shown in Figure~\ref{fig:exp_ablation}.

\begin{figure}[h]
    \centering
    \includegraphics[width=0.8\textwidth]{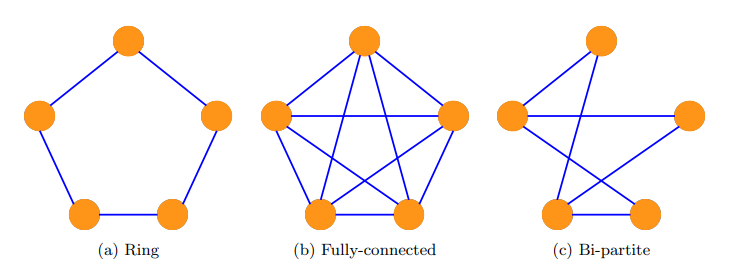}
    \caption{The three network topologies tested in Experiment~\ref{exp:exp_test}.}
    \label{fig:graphs}
\end{figure}

\subsection{Empirical Results}
\label{sec:results}
This section shows the experimental results from the experiments designed in Section~\ref{sec:experiments}. It also includes the result analysis for each of the graphs.

\refstepcounter{experiment}
\label{exp:exp_test}
\subsubsection{Experiment \arabic{experiment}: Performance and Convergence Analysis}
\label{sec:exp_test}
\begin{figure}[H]
    \centering
    \includegraphics[width=\textwidth]{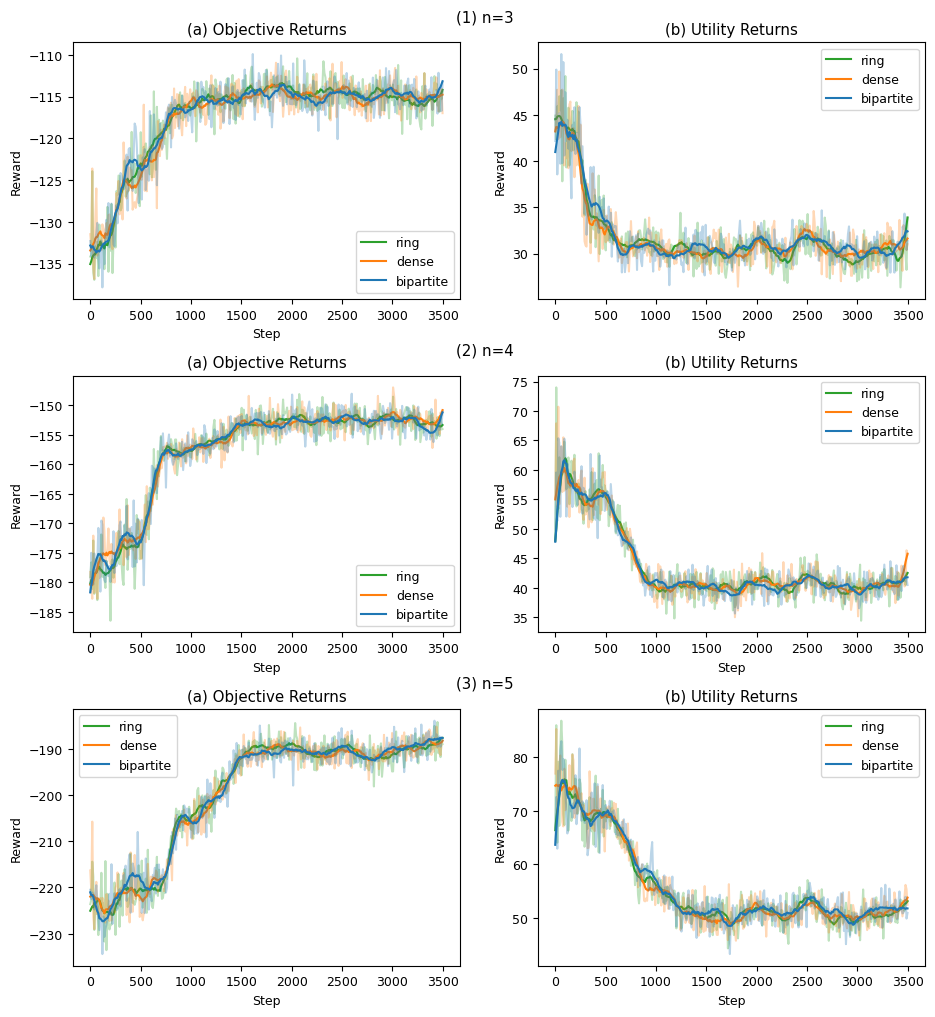}
    \caption{Objective return and utility returns of DePAint in different settings. The experiment was done with n=3, 4, and 5 agents in three different strongly connected graphs: Bipartite, Dense, Ring.}
    \label{fig:exp_test_coopnav}
\end{figure}
\begin{figure}[H]
    \centering
    \includegraphics[width=\textwidth]{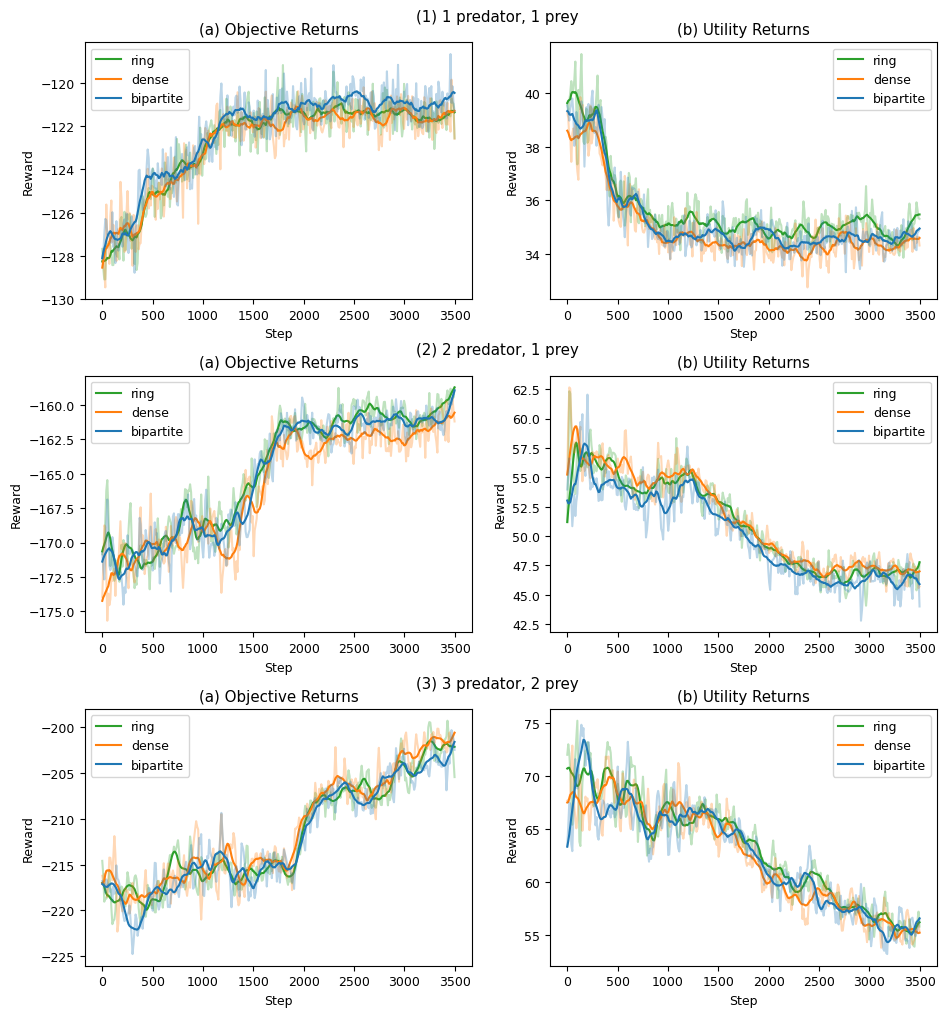}
    
    \caption{Objective return and utility returns of DePAint in different settings. The experiment was done with 1 predator 1 prey, 2 predators 1 prey, and 3 predators 2 preys in three different strongly connected graphs: Bipartite, Dense, Ring.}
    \label{fig:exp_test_predprey}
\end{figure}

\refstepcounter{experiment}
\label{exp:exp_compare}
\subsubsection{Experiment \arabic{experiment}: Comparative and Scalability Analysis}
\label{sec:exp_compare}
\begin{figure}[H]
    \centering
    \includegraphics[width=\textwidth]{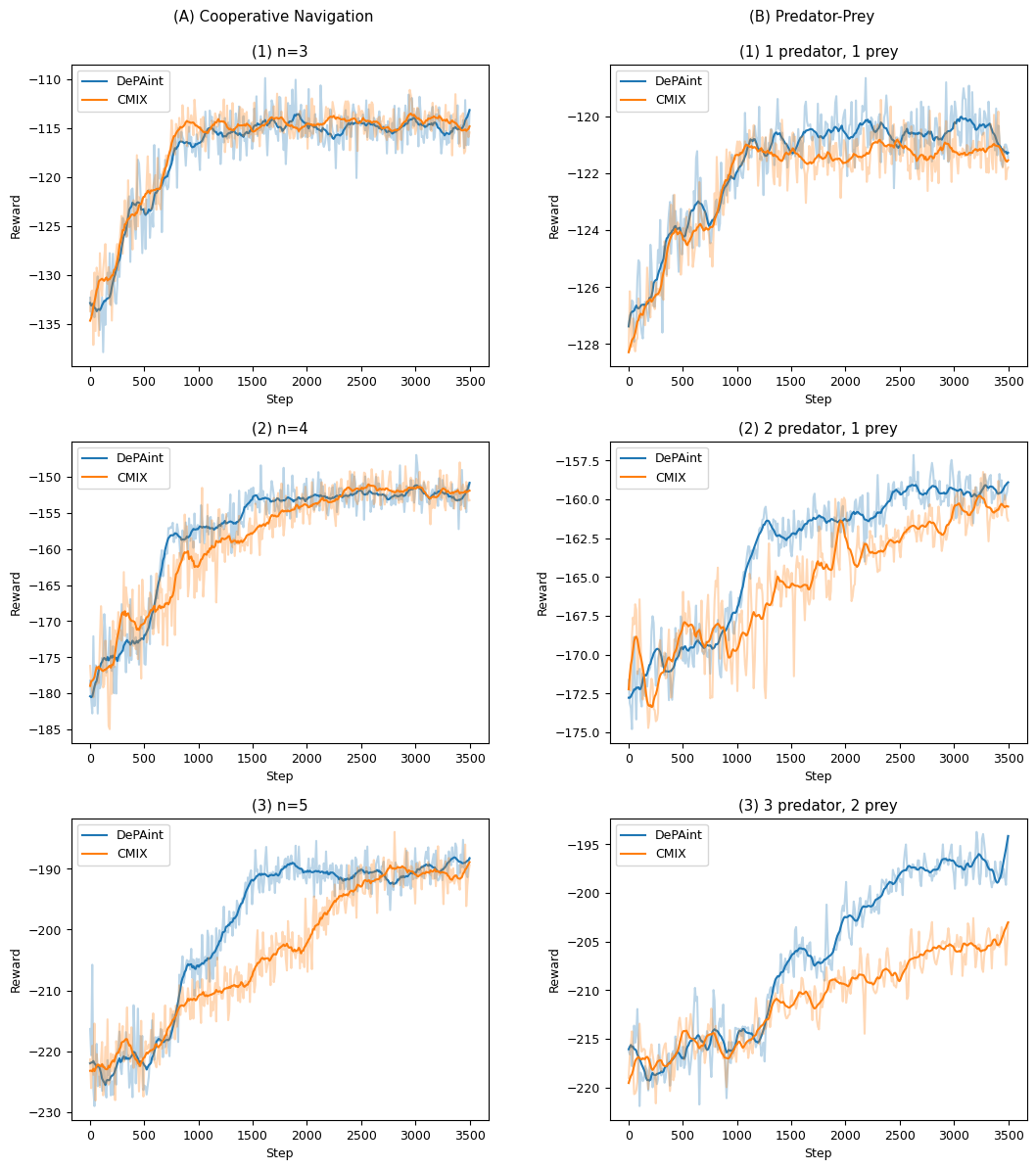}
    
    \caption{Objective Return graph for DePAint and CMIX, in the cooperative navigation and predator-prey environment. The experiment was carried out with n = 3, 4 and 5 agents for the cooperative navigation environment. As for the predator-prey environment the following settings were used: 1 predator 1 prey, 2 predators 1 prey, and 3 predators 2 prey. The ring topology used for DePAint}
    \label{fig:exp_compare}
\end{figure}

\refstepcounter{experiment}
\label{exp:exp_compare}
\subsubsection{Experiment \arabic{experiment}: Ablation Study}
\label{sec:exp_compare}
\begin{figure}[H]
    \centering
    \includegraphics[width=\textwidth]{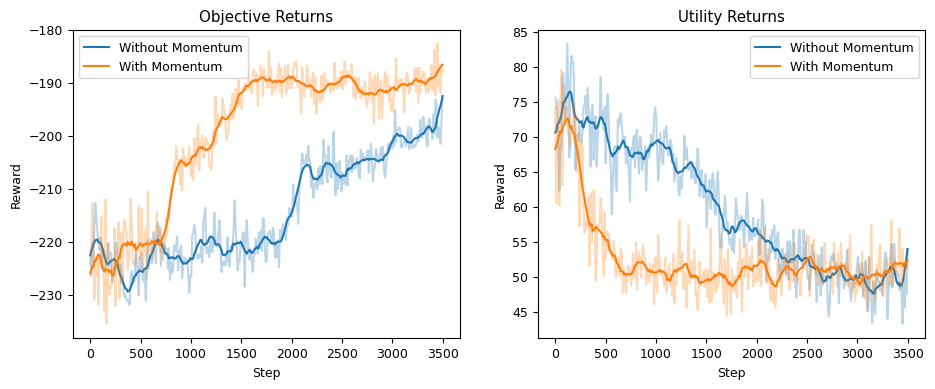}
    
    \caption{Objective Return graph for DePAint with and without momentum-based variance reduction. The experiment was carried out in the cooperative navigation environment with n = 5 agents connected using ring topology}
    \label{fig:exp_ablation}
\end{figure}

\subsubsection{Summary of Experimental Results}
\label{experimentsummary}

The line graphs displayed in Figure \ref{fig:exp_test_coopnav} depict the average sum of objective (Figure \ref{fig:exp_test_coopnav}a) and utility (Figure \ref{fig:exp_test_coopnav}b) returns in the cooperative navigation environment. The experiments were conducted with three distinct topologies (Ring, Dense, and Bipartite) and varying numbers of agents ($n = 3, n = 4, n = 5$). Across these settings, both the objective return and utility return demonstrated consistent trends, with the objective return gradually increasing up to 1000 steps for $n = 3$ and $n = 4$, and up to 1500 steps for $n = 5$. Similarly, the utility return for $n = 3$ declined up to 500 steps, and for $n = 4$ and $n = 5$, it gradually declined up to 1000 steps. These patterns remained comparable across the different topological configurations, confirming the algorithm's resilience to variations in communication network structure in the cooperative navigation environment.

In Figure~\ref{fig:exp_test_predprey}, the line graphs represent the corresponding results for the predator-prey environment. The experiments were conducted with varying compositions of agents: 1 predator 1 prey, 2 predators 1 prey, and 3 predators 2 prey. Similar to the cooperative navigation environment, the objective and utility returns displayed analogous trends. However, convergence in the predator-prey environment occurred at a slower rate compared to cooperative navigation. This deceleration in convergence is attributed to the heterogeneous nature of agents in the predator-prey scenario, introducing additional complexity to the environment. 

Figure \ref{fig:exp_compare} shows the comparison of the objective return of both DePAint and CMIX. It was tested with three different numbers of agents ($n = 3, n = 4, n = 5$), and the ring topology was used for DePAint, which represents a minimal communication network between the agents. Overall, both of them perform pretty well, but our algorithm DePAint outperforms CMIX when the agent number is $n=4$ and $n=5$. 
Due to CMIX being a centralized algorithm, the convergence rate suffers when the number of agents increases. On the other hand, DePAint can overcome this added complexity due to its speedup on account of having a decentralized setup. This shows the superior scalability of DePAint compared to CMIX. Our algorithm performs better when we expand the number of agents, which is to be expected for a decentralized algorithm compared to a centralized one.

Finally, Figure~\ref{fig:exp_ablation} shows the effectiveness of the momentum-based variance reduction technique on the convergence of the algorithm. As we see in the line graph, the objective rewards climb much faster and converge earlier in the version with momentum than in the version without. This shows the necessity of the momentum-based approach in speeding up the convergence of the algorithm.

\section{Conclusions and Future Work}
\label{sec:conclusion}
In this work, we have proposed an efficient algorithm that can solve decentralized multiagent CMDP problems where each agent must maximize the long-term discounted return subject to both peak and average constraints. Numerical results show that the algorithm is capable of performing well with a minimal communication network. Furthermore, according to our empirical evidence, it is more scalable than existing centralized algorithms that deal with similar constraints.  Additionally, it is pertinent to highlight the privacy-preserving nature of our algorithm. Since the rewards and constraints are inherently private to each agent, the decentralized approach ensures that sensitive information remains confidential. This aspect adds an extra layer of significance to our work, as privacy considerations become increasingly crucial in various multiagent systems.

Finally, it is worth noting the potential for algorithmic enhancements. Specifically, the incorporation of natural gradient descent \cite{zhang2019fast} stands as a promising avenue for future refinement. Exploring other variance reduction techniques could contribute to improving the algorithm's stability, while the introduction of asynchronous learning may enhance its practicality, making it better suited for real-world scenarios. Moreover, further work can be done to improve upon the algorithm's performance in environments offering sparse and delayed rewards. This study serves as a foundational step towards the development of more robust, safety-conscious, and privacy-preserving algorithms, setting the stage for continued advancements in the field.

\section{Data Availability and Access}
This research particularly does not involve any explicit datasets. The code used to implement the model described in this paper can be found in the GitHub repository "depaint", which can be accessed via the link \href{https://github.com/hoenchioma/depaint}{https://github.com/hoenchioma/depaint}. The code in this repository can be used to replicate the experimental results similar to those presented in this paper. 

\paragraph{Ethical and Informed Consent for Data Used}
In the research presented within this journal, we emphasize our commitment to ethical data practices.
In the context of our research, it is essential to address the unique data dynamics inherent to reinforcement learning and simulations. Unlike traditional datasets, our study revolves around agent interactions within simulated environments, and as such, ethical considerations differ from those concerning personal or sensitive data.

\section{Author Contribution Statement}
All authors contributed to the conception and design of the study. All authors approved the final manuscript. The authors confirm contribution to the paper as follows,
\begin{itemize}
    \item \textbf{Raheeb Hassan:} conceptualization, methodology, writing - original draft, experimentation
    \item \textbf{K.M. Shadman Wadith:} conceptualization, methodology, theoretical analysis, writing- original draft.
    \item \textbf{Md. Mamun or Rashid:}  editing the draft, visualization and investigation.
    \item \textbf{Md. Mosaddek Khan:} conceptualization, supervision, project administration, finalizing the draft.
\end{itemize}

\paragraph{Competing Interest}
The authors declare that they have no known competing financial interests or personal relationships that could have appeared to influence the work reported in this paper.

\bibliography{bibliography}

\end{document}